\newcounter{algorithm}
\theoremstyle{plain}
\newtheorem{theorem}{Theorem}
\newtheorem{lemma}{Lemma}[section]
\newtheorem{prop}{Proposition}[section]
\newtheorem{example}{Example}[section]
\newtheorem{corollary}[prop]{Corollary}
\newtheorem{definition}{Definition}
\newtheorem{algo}[algorithm]{Algorithm}
\newtheorem{remark}{Remark}
\newtheorem{conjecture}{Conjecture}[section]
\numberwithin{equation}{section}
\renewcommand{\phi}{\varphi}
\newcommand{\NN}{\mathbb{N}}
\newcommand{\ZZ}{\mathbb{Z}}
\newcommand{\QQ}{\mathbb{Q}}
\newcommand{\KK}{\mathbb{K}}
\newcommand{\FF}{\mathbb{F}}
\DeclareMathOperator{\ord}{ord}
\DeclareMathOperator{\sep}{Sep}
\newcommand{\Init}{\mathcal{I}}
\newcommand{\rsy}{R_{\boldsymbol{y}}}
\newcommand{\rs}{R_{\boldsymbol{s}}}
\newcommand{\mdt}{\mathrm{M}_{\delta_2}}
\newcommand{\mdtm}{\mathrm{m}_{\delta_2}}
\newcommand{\mdk}{\mathrm{M}_{\delta_k}}
\newcommand{\mtk}{\mathrm{M}_{\theta_k}}
\newcommand{\mtkm}{\mathrm{m}_{\theta_k}}
\newcommand{\rta}{r_{\theta}}
\newcommand{\sdr}{\mathscr{D}_{r_{\delta}^*}}
\newcommand{\mdkm}{\mathrm{m}_{\delta_k}}
\newcommand{\rd}{r_{\delta}^*}
\newcommand{\ld}{l_{\delta}^*}
\newcommand{\Dd}{D_{\delta}^*}
\newcommand{\rda}{r_{\delta}}
\newcommand{\p}{\mathscr{P}}
\newcommand{\fsig}{\mathbb{F}_{\sigma}}
\begin{document}
\title{D-Algebraic Guessing}
\author{Bertrand Teguia Tabuguia}
\address{University of Oxford, Wolfson Building, Parks Road, Oxford OX1 3QD, UK}
\email{email@bertrandteguia.com}
\begin{abstract}
Given finitely many consecutive terms of an infinite sequence, we discuss the construction of a polynomial difference equation that the sequence may satisfy. We also present a method to seek a candidate polynomial differential equation for its generating function. It appears that these methods often lead to effective D-algebraic operations.
\end{abstract}
\maketitle

\section{Introduction}\label{sec:intro}

The phrase {\it differentially-algebraic functions} refers to functions of a continuous variable that satisfy polynomial differential equations. Examples include the Weierstrass $\wp$ function, which satisfies the ODE $(\wp')^2=4\wp^3-g_2\wp-g_3$ for some complex constants  $g_2,g_3$, and Painlevé transcendents given by ODEs of the form $y''=R(y, y',t),y'=\frac{dy}{dt}$, where $R$ is a rational function. \textit{Difference-algebraic sequences}, on the other hand, satisfy polynomial difference equations. This includes all sequences defined by linear recurrence equations with constant coefficients, such as Fibonacci numbers $(F_n)_{n\in\NN}: F_{n+2}=F_{n+1}+F_n, F_1=1, F_0=0$; and as it turns out, sequences defined by linear recurrence equations with polynomial coefficients in the index variable \cite{teguia2024computing}. For instance, Catalan numbers $(C_n)_{n\in\NN}$ known by the recursion $(n+2)C_{n+1}=(4n+2)C_n,\,C_0=1,$ may be defined as 
\begin{small}
\[C_{n+2}=\frac{2 C_{n+1} \left(8 C_{n}+C_{n+1}\right)}{10 C_n-C_{n+1}},\, C_0=1,\, C_1=1.\]
\end{small}
For both objects, we use the adjective {\it D-algebraic}. The main goal behind D-algebraic guessing is to use some consecutive terms of an infinite sequence to conjecture that the sequence or its generating function is D-algebraic and provide an equation that certifies it. The conjecture may be proved from known algebraic properties of the sequence or its generating function. The following proposition is an example.

\begin{prop}\label{prop:catoFib} The sequence $(s_n)_{n\geq 1}, s_n\coloneqq C_n/F_n$, where $F_n$ and $C_n$ are as above, satisfies the following recurrence equation
\begin{small}
\begin{multline*}
-140\,s_{n+3}^2\,s_{n+1}^3\,s_{n} + 2\,s_{n+3}^2\,s_{n+1}^3\,s_{n+2} + 2240\,s_{n+3}^2\,s_{n+1}^2\,s_{n}^2 + 52\,s_{n+3}^2\,s_{n+1}^2\,s_{n}\,s_{n+2} \\
- s_{n+3}^2\,s_{n+1}^2\,s_{n+2}^2 + 1176\,s_{n+3}^2\,s_{n+1}\,s_{n}^2\,s_{n+2} - 27\,s_{n+3}^2\,s_{n+1}\,s_{n}\,s_{n+2}^2 - 140\,s_{n+3}^2\,s_{n}^2\,s_{n+2}^2 \\
+ 544\,s_{n+3}\,s_{n+1}^3\,s_{n}\,s_{n+2} + 2\,s_{n+3}\,s_{n+1}^3\,s_{n+2}^2 - 6912\,s_{n+3}\,s_{n+1}^2\,s_{n}^2\,s_{n+2}- 140\,s_{n+3}\,s_{n}^2\,s_{n+2}^3  \\
+ 4096\,s_{n+1}^2\,s_{n}^2\,s_{n+2}^2 - 332\,s_{n+3}\,s_{n+1}^2\,s_{n}\,s_{n+2}^2 - 2\,s_{n+3}\,s_{n+1}^2\,s_{n+2}^3 - 832\,s_{n+3}\,s_{n+1}\,s_{n}^2\,s_{n+2}^2\\
+ 512\,s_{n+1}\,s_{n}^2\,s_{n+2}^3 - 34\,s_{n+3}\,s_{n+1}\,s_{n}\,s_{n+2}^3- 512\,s_{n+1}^3\,s_{n}\,s_{n+2}^2 - 4\,s_{n+1}^3\,s_{n+2}^3\\
- 32\,s_{n+1}^2\,s_{n}\,s_{n+2}^3 = 0.
\end{multline*}
\end{small}
\end{prop}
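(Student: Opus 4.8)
The plan is to read the statement as an elimination problem. Both $(F_n)$ and $(C_n)$ satisfy \emph{autonomous} polynomial difference equations: for Fibonacci I use $F_{n+2}=F_{n+1}+F_n$, so that $F_{n+2}=F_n+F_{n+1}$ and $F_{n+3}=F_n+2F_{n+1}$ are $\ZZ$-linear in $F_n,F_{n+1}$; for Catalan I use the quadratic recurrence displayed above, cleared of denominators as $(10C_m-C_{m+1})C_{m+2}=2C_{m+1}(8C_m+C_{m+1})$, valid for every $m$. Since $F_n\neq 0$ and $C_n\neq 0$ for $n\ge 1$, the sequence $s_n=C_n/F_n$ is well defined and the substitution $C_{n+k}=s_{n+k}F_{n+k}$ is legitimate. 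First I would apply this substitution to the Catalan relation at the two shifts $m=n$ and $m=n+1$ and replace $F_{n+2},F_{n+3}$ by their linear expressions, obtaining two polynomial equations $G_1(s_n,s_{n+1},s_{n+2};F_n,F_{n+1})=0$ and $G_2(s_{n+1},s_{n+2},s_{n+3};F_n,F_{n+1})=0$, each \emph{homogeneous of degree two} in $(F_n,F_{n+1})$. The homogeneity reflects the joint scaling invariance $F\mapsto\lambda F$, $C\mapsto\mu C$ of the two recurrences, under which every $s_{n+k}$ is multiplied by $\mu/\lambda$; this is also why the target polynomial is homogeneous of degree six.

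Next I would eliminate the only surviving unknown. Dehomogenising by $w:=F_{n+1}/F_n$ turns $G_1,G_2$ into two quadratics $A(w),B(w)$ whose coefficients are polynomials of degree two in $s_n,s_{n+1},s_{n+2},s_{n+3}$. Because the genuine Fibonacci/Catalan data provide an actual common value $w=F_{n+1}/F_n$ of $A$ and $B$, the resultant $\Res_w(A,B)$ vanishes at $(s_n,s_{n+1},s_{n+2},s_{n+3})$; computing the $4\times4$ Sylvester determinant shows it is a homogeneous polynomial of degree eight in the $s_{n+k}$. Equivalently one may extract the elimination ideal $\langle G_1,G_2\rangle\cap\KK[s_n,s_{n+1},s_{n+2},s_{n+3}]$ from a Gröbner basis under an elimination order; I expect both routes to produce the same hypersurface.

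The hard part is the mismatch between this degree-eight resultant and the degree-six statement: elimination introduces a spurious quadratic factor that I would pin down explicitly. Setting $s_{n+2}=0$ makes the constant terms of both $A$ and $B$ vanish, so $w=0$ is a forced common root and $s_{n+2}\mid\Res_w(A,B)$; a short computation shows that setting $s_{n+1}=0$ instead forces the common root $w=-1$, so $s_{n+1}\mid\Res_w(A,B)$ as well. Neither $s_{n+1}$ nor $s_{n+2}$ divides the claimed polynomial (it contains $-140\,s_{n+3}^2 s_{n}^2 s_{n+2}^2$ and $-140\,s_{n+3}^2 s_{n+1}^3 s_n$, respectively), so the factorisation must read $\Res_w(A,B)=c\,s_{n+1}s_{n+2}\,P$ for a nonzero constant $c$ and the displayed polynomial $P$; verifying this identity is the one genuinely mechanical step. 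The extraneous factor $s_{n+1}s_{n+2}$ vanishes exactly where $C_{n+1}$ or $C_{n+2}$ is zero, which never happens, so it is discarded (equivalently, one saturates the elimination ideal by $s_{n+1}s_{n+2}$). Finally, since $\Res_w(A,B)=0$ on the true data while $s_{n+1},s_{n+2}\neq 0$, I conclude $P=0$, which is the assertion; I would corroborate the whole computation numerically on the first several exact ratios $C_n/F_n$.
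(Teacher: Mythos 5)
Your proof is correct, and it takes a genuinely different route from the paper's. The paper proves Proposition \ref{prop:catoFib} in Example \ref{ex:arthmDseq} by guessing plus linear algebra: modular runs of Algorithm \ref{algo:Algo3} (with $175$ terms and the primes $751$ and $5003$) fix a support of $21$ difference monomials; an ansatz on that support is substituted with $s_{n+k}=C_{n+k}/F_{n+k}$, all shifts $C_{n+j},F_{n+j}$ with $j\geq 2$ are rewritten via the two input recurrences, and the numerator, viewed as a polynomial in the independent data $C_n,C_{n+1},F_n,F_{n+1}$, is equated to zero coefficientwise, yielding an overdetermined $41\times 21$ system whose solution space is exactly the line spanned by the stated coefficient vector. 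You bypass the guessing entirely: after the same substitution, the linearity of the Fibonacci recurrence reduces everything to the single unknown $w=F_{n+1}/F_n$, which you eliminate with one resultant of two quadratics. Your bookkeeping checks out: $G_1,G_2$ are homogeneous of degree $2$ in $(F_n,F_{n+1})$, the resultant has degree $8$, and the forced common roots ($w=0$ when $s_{n+2}=0$, $w=-1$ when $s_{n+1}=0$) are as you say; moreover I verified your deferred ``mechanical step'' at several test points --- with the Sylvester convention one gets $\Res_w(A,B)=12\,s_{n+1}s_{n+2}\,P$ where $P$ is the displayed polynomial (e.g.\ at $s_n=s_{n+1}=s_{n+2}=s_{n+3}=1$, $\Res_w(A,B)=-5808=12\cdot(-484)$), so the factorization you posit is the true one. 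Two remarks. First, your divisibility observations alone do not force the quotient to be proportional to $P$; the proof genuinely rests on the final comparison, which you correctly isolate and which does succeed. Second, on what each approach buys: your elimination argument is self-contained, needs no sequence terms, and explains structurally why the relation has order $3$ and degree $6$ (a degree-$8$ eliminant stripped of two forced linear factors), but it exploits the special fact that Fibonacci is linear so only one variable must be eliminated; for a general operation on two nonlinear D-algebraic sequences one would need Gr\"obner bases or iterated resultants, precisely the cost that the paper's ansatz-and-linear-system method, seeded by modular guessing, is designed to avoid (cf.\ Example \ref{ex:arthmDfun}). Both proofs share the feature the paper emphasizes: they are independent of the closed forms of $C_n$ and $F_n$.
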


As a multivariate polynomial, the left-hand side of the difference equation from Proposition \ref{prop:catoFib} is of total degree $6$. The highest index in the equation is $n+3$. We then say that the sequence $(s_n)_{n\geq 1}$ is D-algebraic of order $3$ and degree $6$. Note that it is not enough to have an equation satisfied by the sequence; it is essential that the equation enables the computation of any term from a certain index. This excludes equations like $s_{n+1}\,s_n=0$ for the power series coefficients of the sine and cosine functions. 

An easy proof of the proposition follows from evaluating the equation with a closed form of $C_n/F_n$. We give a proof independent of the explicit formulas of $C_n$ and $F_n$ in Example \ref{ex:arthmDseq}.

We comment that having a difference equation free of the index variable among its coefficients provides not only a view of sequences as points in the algebraic geometric sense, but also a constant recursive rule for generating terms of its sequence solutions. Indeed, each step of the calculation is not tied to its specific position in the sequence. This fact can be exploited to investigate fast evaluations of rational recursive sequences using finite fields. We also mention that these sequences arise naturally as outputs of some polynomial automata in computer science \cite[Example 1]{benedikt2017polynomial}.

The method of guessing has been extensively used in combinatorics to prove conjectures with D-finite functions. These are functions and sequences, such as the Catalan numbers, whose power series coefficients and generating functions, respectively, satisfy equations of the same description; hence, the choice of a single name for both. For instance, the generating function of Catalan numbers satisfies a second-order linear differential equation with polynomial coefficients of degree at most two in the independent variable. For further reading about these objects, we recommend the book \cite{kauers2023d} and the references therein. Using the guess-and-prove paradigm with D-finite closure properties, the following two theorems were established.

\begin{theorem}[Kauers, Koutschan, and Zeilberger \cite{kauers2009proof}]\label{theo:theo1} Let $g(n;i,j)$ denote the number of Gessel walks going in $n$ steps from $(0,0)$ to $(i,j)$. Then
\[g(2n+1;0,0)=0\,\quad \text{and}\,\quad g(2n;0,0)=\frac{(5/6)_n(1/2)_n}{(5/3)_n(2)_n}\,\, (n\in\NN),\]
where $(a)_n\coloneqq a(a+1)\cdots (a+n-1)$ is the Pochhammer symbol.
\end{theorem}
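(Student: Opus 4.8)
The plan is to follow the guess-and-prove paradigm emphasized above, reducing the statement to a claim about the excursion sequence $e(n)\coloneqq g(n;0,0)$, guessing a linear recurrence with polynomial coefficients for its even part, and finally certifying that recurrence rigorously through holonomic closure properties and creative telescoping. The Gessel step set is $\{(1,0),(-1,0),(1,1),(-1,-1)\}$, so for $i,j\geq 0$ the counts satisfy
\[
g(n;i,j)=g(n-1;i-1,j)+g(n-1;i+1,j)+g(n-1;i-1,j-1)+g(n-1;i+1,j+1),
\]
with $g(0;i,j)=[i=j=0]$ and $g(n;i,j)=0$ whenever $i<0$ or $j<0$; these relations determine $g$ uniquely. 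The odd part of the claim is immediate: each of the four steps changes the $x$-coordinate by $\pm1$, so after $n$ steps that coordinate has the same parity as $n$, and returning to $i=0$ forces $n$ even; hence $g(2n+1;0,0)=0$. Packaging the recurrence into $G(x,y;t)=\sum_{n,i,j}g(n;i,j)\,x^iy^jt^n$ gives a kernel functional equation with kernel $1-t\,(x+x^{-1}+xy+x^{-1}y^{-1})$, whose boundary terms encode the quarter-plane restriction, and the target quantity is the constant term $e(n)=[x^0y^0]\,[t^n]\,G$.

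\textbf{The guess step.} From the recurrence I would tabulate several hundred values of $a(n)\coloneqq e(2n)$ and apply the guessing procedure of this paper, specialized to the D-finite (linear) case, to search for a recurrence $\sum_{k=0}^{r}p_k(n)\,a(n+k)=0$ with $p_k\in\QQ[n]$. In parallel, the conjectured closed form $h(n)=\frac{(5/6)_n(1/2)_n}{(5/3)_n(2)_n}$ is a hypergeometric term, so a short Gosper-style computation of the ratio $h(n+1)/h(n)$ shows it satisfies the first-order recurrence $4(3n+5)(n+2)\,h(n+1)=(6n+5)(2n+1)\,h(n)$ with $h(0)=1=a(0)$. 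Matching these and verifying agreement of enough initial terms yields the conjecture $a(n)=h(n)$.

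\textbf{The prove step.} To make this rigorous I would establish that $g(n;i,j)$ is holonomic in the three indices, i.e. annihilated by a holonomic system of recurrence operators in the Ore algebra generated by the shifts $S_n,S_i,S_j$. I would guess a generating set of annihilating operators from the tabulated data and then verify it rigorously: reducing each guessed operator modulo the defining recurrence and the boundary relations should yield the zero operator, an identity certifiable by a Gr\"obner-basis computation in the Ore algebra. Since the defining relations pin down $g$ uniquely, this certifies the guessed system. Eliminating $S_i,S_j$ then produces a certified recurrence in $n$ alone for $e(n)$, and passing to the even-indexed subsequence (a D-finite closure operation) gives one for $a(n)$; creative telescoping supplies both the elimination and the explicit certificates. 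Substituting $h(n)$ into this recurrence and matching finitely many initial values forces $a(n)=h(n)$, completing the proof.

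The hard part will be the prove step, for two reasons. First, holonomicity is not a formality here: Gessel's model is the notorious case where the orbit sum of the kernel method vanishes, so one cannot extract a D-finite equation for $e(n)$ by elementary means, and the guessed operators must be shown to respect the boundary relations, not merely the interior recurrence. Second, the elimination and telescoping certificates are enormous; a direct computation over $\QQ$ is infeasible, and one must certify the operators by evaluation--interpolation together with modular reconstruction over finite fields $\FF$, precisely the effective computation foreshadowed in the introduction. This certification is the crux of the original proof \cite{kauers2009proof}.
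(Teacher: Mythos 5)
First, a point of comparison: the paper does not prove this theorem at all — it quotes it as a known result of Kauers, Koutschan, and Zeilberger, established exactly by the guess-and-prove route you outline. Your easy parts are correct: the parity argument for $g(2n+1;0,0)=0$ is fine, and the ratio computation giving $4(3n+5)(n+2)\,h(n+1)=(6n+5)(2n+1)\,h(n)$ for the conjectured closed form is right. So your skeleton matches the cited proof; the question is whether your certification step holds up.

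It does not, as stated, and this is a genuine gap rather than a technicality. You propose to certify the guessed trivariate operators by ``reducing each guessed operator modulo the defining recurrence and the boundary relations'' to zero via a Gr\"obner-basis computation in the Ore algebra generated by $S_n,S_i,S_j$. But the zero-extended sequence $g$ is \emph{not} annihilated by the transfer operator: at $i=-1$ the step recurrence would give $g(n-1;0,j)+g(n-1;0,j+1)$ on the right while the left side is $0$, so the recurrence is an identity only on the quarter plane, and the ``boundary relations'' are support conditions on the sequence, not elements of the Ore algebra. Consequently there is no left ideal generated by ``the defining recurrence and the boundary relations'' to reduce modulo, and reduction to zero neither makes sense nor certifies that a guessed operator annihilates $g$. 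This is precisely the obstruction that forced Kauers--Koutschan--Zeilberger to invent the \emph{quasi-holonomic ansatz}: they restrict to operators whose coefficients are divisible by polynomials vanishing on the boundary lines, so the guessed relations hold there trivially, and then prove the relations in the interior by induction on $n$, the induction step being a finite polynomial-identity check against the step recurrence; extracting a univariate recurrence at $i=j=0$ then requires additional care, since those coefficients vanish at the origin. (Relatedly, plain holonomicity of $g(n;i,j)$, which you say you would ``establish,'' was not known at the time and is not needed in their argument.) Your last paragraph shows you sense that the boundary is the crux, but your proposed remedy — evaluation--interpolation and modular reconstruction — addresses only the \emph{size} of the certificates, not this logical point. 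Without the quasi-holonomic device, or an equivalent such as certifying a coupled system that includes the boundary sequences $g(n;i,0)$ and $g(n;0,j)$, your prove step does not go through.
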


\begin{theorem}[Bostan and Kauers \cite{bostan2010complete}]\label{theo:theo2} The complete generating function for Gessel walks $G(t;x,y)=\sum_{i=0}^{\infty}\sum_{j,k\in\ZZ}g(i;j,k)\,t^ix^jy^k$ is algebraic.
\end{theorem}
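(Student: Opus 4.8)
The plan is to prove algebraicity by the guess-and-prove paradigm that motivates this paper: derive the functional equation that $G$ must satisfy, guess a polynomial annihilating it, and then certify the guess against that equation. First I would record the step-by-step recurrence for Gessel walks. With step inventory $S(x,y)=x+x^{-1}+xy+x^{-1}y^{-1}$, the generating function $Q(t;x,y)\coloneqq G(t;x,y)$ satisfies a kernel functional equation of the form
\[ xy\,K(t;x,y)\,Q(t;x,y) = xy - t\,\Phi(t;x) - t\,\Psi(t;y), \]
where $K(t;x,y)=1-tS(x,y)$ is the kernel and $\Phi,\Psi$ collect the boundary corrections encoding walks forbidden from leaving the quarter plane along $x=0$ and $y=0$. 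The key structural fact is that this equation, together with $Q(t;0,0)=1$, determines $Q$ uniquely as a formal power series in $t$; hence to prove Theorem \ref{theo:theo2} it suffices to exhibit a nonzero polynomial $P\in\QQ[T,t,x,y]$ with $P(Q;t,x,y)=0$.

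Second, I would produce a candidate $P$ by guessing. Computing the coefficients $g(i;j,k)$ up to a large order from the recurrence gives a long truncation of the series $Q$; feeding these data to an algebraic-guessing routine---solving the linear system that asks for a polynomial relation among $1,Q,Q^2,\dots,Q^d$ with coefficients in $\QQ[t,x,y]$ of bounded degree, which is a Hermite--Padé ansatz of exactly the D-algebraic guessing type discussed here, specialized to the algebraic (rather than differential) case---returns conjectural minimal polynomials. In practice one first guesses for the boundary specializations $Q(t;x,0)$ and $Q(t;0,y)$ and for the excursion series $Q(t;0,0)$, whose closed form is supplied by Theorem \ref{theo:theo1}, and only then attempts the full bivariate $Q$.

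Third, and this is the crux, I would certify the guess. Given the candidate $P=0$, I would verify that the algebraic function it defines---pinned down by the correct initial term---actually satisfies the kernel functional equation. Substituting the algebraic relation into the functional equation and eliminating $Q$ by resultants or Gröbner bases reduces the check to a polynomial identity in $\QQ[t,x,y]$; alternatively, once an a priori degree bound for $P$ is available, it suffices to verify agreement of sufficiently many series coefficients. Because the uniqueness from the first step forces any power-series solution of the functional equation to coincide with $Q$, a successful verification promotes the conjectural $P$ to a genuine annihilator, establishing algebraicity.

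The hard part will be the size and the rigor of these computations rather than any single idea. The guessed polynomials are enormous, of high degree in each of $T,t,x,y$, so both the linear algebra of the guessing step and the elimination of the certification step sit at the edge of feasibility and demand modular and evaluation--interpolation techniques. A conceptual obstruction also lurks beneath the computation: the model's group of birational symmetries is finite of order $8$, yet its orbit sum over the functional equation vanishes identically, so the classical kernel method produces no closed form and cannot substitute for the computer-algebra certificate. This degeneracy is precisely why algebraicity here is delicate, and why a guessing-based argument is the natural---indeed historically the first successful---route to Theorem \ref{theo:theo2}.
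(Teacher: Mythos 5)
This theorem appears in the paper only as cited background: the paper gives no proof of its own, attributing the result to Bostan and Kauers \cite{bostan2010complete} and noting it was established "using the guess-and-prove paradigm with D-finite closure properties." Your sketch faithfully reconstructs exactly that route---the kernel functional equation with its unique power-series solution, guessed algebraic equations for the boundary specializations, and computational certification (with the vanishing orbit sum of the order-$8$ group explaining why the classical kernel method cannot replace it)---so it is essentially the same approach as the one the paper invokes.
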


As a consequence of the bridge between D-finite sequences and rational recursions established in \cite{teguia2024rational,teguia2024computing}, Theorem \ref{theo:theo1} might be analogously obtained by proving the required identities with the corresponding rational recursions. Note that non-guessing algebraic proofs of the above two theorems appeared in \cite{bousquet2016square}. A central motivation for considering nonlinear polynomial equations for sequences is to enlarge their closure properties beyond D-finiteness. For example, the sequence $(s_n)_{n\geq 1}$ from Proposition \ref{prop:catoFib} is not D-finite, implying the same conclusion for its generating function.  

The guess-and-prove paradigm for D-finite functions is presented in an artistic manner in \cite{yurkevich2022art}, featuring recent applications in number theory and biology. For a general illustration of the impact of computer algebra in combinatorics, we recommend \cite{bostan2017computer}. Corresponding software includes Gfun \cite{salvy1994gfun} for Maple and Guess \cite{kauers2009guessing} for Mathematica. A FriCAS implementation with a D-algebraic flavor was presented in \cite{hebisch2011extended}.

For D-algebraic functions, we allow the coefficients to be polynomials in the index variables. This makes D-finite functions naturally seen as D-algebraic functions of degree $1$, and algebraic functions as D-algebraic functions of order $0$. Recent results in \cite{bostan2020exponential}, \cite{bernardi2020counting}, \cite{bousquet2020generating}, and \cite{chen2025patterns} show a growing interest in D-algebraic functions solving nonlinear differential equations. A generalized family of weighted tree automata with D-algebraic generating functions is studied in \cite{manssour2024differential}. 

Technically, guessing algorithms for D-algebraic functions can always be regarded as particular instances of the Hermite-Padé approximation \cite{beckermann1994uniform}. As efficiency is not the primary focus of this paper, we simply note that the linear systems from these computations can be handled more effectively using Hermite-Padé approximation. Nevertheless, because it is almost unavoidable, we address computations over finite fields to avoid slowing the algorithm with long-integer arithmetic. We also mention the work in \cite{kauers2022guessing}, which combines guessing with the LLL algorithm to obtain results with fewer input terms, an aim we discuss from another algorithmic perspective in Example \ref{ex:littleDfun}.

The goal of this paper is thus justified in extending computer algebra capabilities beyond D-finiteness. We expect D-algebraic guessing to leverage the computational cost of D-algebraic operations \cite{van2019computing,RAB2024d,teguia2025arithmetic,teguia2024computing}. We present some examples in this direction. The idea is to guess a differential equation and verify that its associated differential polynomial belongs to the ideal of the operation (see Example \ref{ex:arthmDfun}). Alternatively, the guess can serve in finding the differential monomials of the target equation via computations over finite fields, and then an ansatz can be used to recover the correct equation with over the ground field (see Example \ref{ex:arthmDseq}). A similar strategy is exploited in \cite{mukhina2025projecting}.

The rest of the paper is organized into four main sections commencing with some preliminaries in Section \ref{sec:prelem}, then detailing our method for guessing D-algebraic functions in Section \ref{sec:dalgfun}, D-algebraic sequences in Section \ref{sec:dalgseq}, and presenting our software with some applications in Section \ref{sec:dalgapl}. Our Maple implementation is part of the NLDE package\footnote{available at \url{https://github.com/T3gu1a/D-algebraic-functions}.}.

\section{Preliminaries}\label{sec:prelem}

The methods described in this paper are conceptually simple and require only a relatively moderate theoretical background. A little knowledge of the basics of differential algebra and linear systems solving is enough to understand this paper.

We denote by $\KK$ a field of characteristic zero, typically a finite extension of the rationals.

\subsection{Settings for functions} Consider the differential field $\FF_x\!\coloneqq\!(\KK(x),\partial_x)$, where $\partial_x$ is the additive map that satisfies the Leibniz rule with $\partial_x(x)=1$ and $\partial_x(c)=0$, $c\in\KK$. We call $\partial_x$ the derivative operator. For any $f\in \FF_x$, we use the notations $\partial_x(f)= f',$ $\partial_x(\partial_x^j(f))=\partial_x^{j+1}(f)=f^{(j+1)},j\in\NN$ $(f^{(0)}=f)$ for its derivatives. By convension we set $f^{(-1)}=1.$

Let $y_1,\ldots,y_m$ be differential indeterminates. By $R_{\boldsymbol{y}}\coloneqq \FF_x\{y_1,\ldots,y_m\}$ we denote the ring of differential polynomials in $\boldsymbol{y}=y_1,\ldots,y_m$; i.e., $y_i^{(j)}\in R_{\boldsymbol{y}},$ for all $j\in\NN$ and $i=1,\ldots,m.$ The choice of $\KK(x)$ as base field is only for convenience of computational purposes; otherwise, we only deal with differential polynomials with coefficients in $\KK[x]$. 

A differential ideal $I\subset R_{\boldsymbol{y}}$ is an ideal such that $\partial_x(I)\subset I$. For a finite sequence $p_1,\ldots,p_{\ell}\in R_{\boldsymbol{y}},$ we denote by $<p_1,\ldots,p_{\ell}>$, the minimal differential ideal containing $p_1,\ldots,p_{\ell}$ and their derivatives.

Let $p\in R_{\boldsymbol{y}}.$ The order of $p$ with respect to (w.r.t.) $y_j$, denoted $\ord_{y_j}(p)$ is the maximum $r$ such that $y_j^{(r)}$ appears in $p$. The degree of $p$ w.r.t. $y_j$, denoted $\deg_{y_j}$, is its total degree as a multivariate polynomial in $\FF_x[y_j,\ldots,y_j^{(r)}],$ $r\!=~\!\ord_{y_j}(p).$ The separant of $p$ w.r.t. $y_j$ is defined as
\[\sep_{y_j,p} \coloneqq \frac{\partial p}{\partial y^{(r)}}\in\rsy,\,\, r=\ord_{y_j}(p).\]
Throughout the paper, we discuss multivariate differential polynomials in the context of ideals; our primary focus, however, is on the zeros of univariate differential polynomials. This corresponds to $m=1$ and $\rsy=R_{y_1}$ above, in which case $\ord_{y_1},\deg_{y_1}$, and $\sep_{y_1,p}$ are simply $\ord, \deg,$ and $\sep_p$, respectively. The degree of $C(x)\in\KK[x]$ is denoted $\deg_x(C(x)).$

\begin{definition}[D-algebraic function] A function $f$ is said to be differentially algebraic (or D-algebraic) over $\FF_x$ if there exists a differential polynomial $p\in R_y$ such that $p(f)\!\coloneqq\!p(y\!=\!f)\!=\!0$ and $\sep_p(f)\neq 0.$ In this case $p$ is called a defining differential polynomial for $f$. 
\end{definition}
Note that the definition of a D-algebraic function relies on the base field only for the differential polynomial and not for the initial values.

\begin{example}\label{ex:zeta0} As we are interested in generating functions, we identify any function with its power series representation around zero. The function
\[f\coloneqq \frac{1 - \pi \sqrt{x} \cot(\pi \sqrt{x})}{2x} = \frac{\pi^{2}}{6} +\frac{\pi^4}{90} x + \frac{\pi^6}{945} x^2 + \cdots=\sum_{n=0}^{\infty}\zeta(2n+2) x^{n},\]
is D-algebraic over $(\QQ(x),\partial_x)$ and can be defined with the differential polynomial $p=-2{y}^2 + 5{y'} - 4x{y'}{y} + 2x{y''}$. Indeed, we have $p(f)=0$, and $\sep_p=x$ implying $\sep_p(f)\neq 0$. 
\end{example}

For any $p\in R_y$, we call the equation $p=0$ the algebraic differential equation associated to $p$.

\subsection{Settings for sequences} We consider the difference field $\fsig\coloneqq (\KK,\sigma)$, where $\sigma$ is an endomporphism in $\KK.$ We write $\sigma^j(a)=a_j\in\KK, a\in\KK.$ The endomorphism $\sigma$ is extended to define $\rs\coloneqq \fsig\{s_{1,n},\ldots,s_{m,n}\},$ the ring of difference polynomials in the difference indeterminates $\boldsymbol{s}\!\coloneqq\! s_{1,n},\ldots,s_{1,m},$ where $\sigma^j(s_{i,n})=s_{i,n+j}.$ In an analogous manner as in the previous subsection, we define the notion of difference polynomial, algebraic difference equation, difference ideal, order, and degree.

Let $p\in\rs$. We say that $p$ is \textit{rationalizing} w.r.t. $s_{j,n}$ if it is linear in $s_{j,n+r}, r=\ord_{s_{j,n}}(p)$. The initial of $p$ w.r.t. $s_{j,n}$, denoted $\Init_{s_j,p}$, is the leading polynomial coefficient of $p$ viewed as a univariate polynomial coefficient in $s_{j,n+r}.$

\begin{definition}[D-algebraic sequence \cite{teguia2024computing}]\label{def:dalgseq} Let $p\in R_s$, and $p=p_0p_1\cdots p_{\ell},$ its algebraic decomposition into irreducible components. Let $u\coloneqq (u_n)_{n\in\NN}$ be a sequence and denote $\NN_{p_j}(u_n)\coloneqq \{n\in\NN,\, p_j(u_n)=0\}.$ The sequence $u$ is difference-algebraic (or D-algebraic) over $\fsig$ with defining difference polynomial $p$, if:
\begin{enumerate}
    \item For all $j\in\{0,\ldots,\ell\}$, $\#\NN_{p_j}(u_n)=\infty;$
    \item For all $j\in\{0,\ldots,\ell\}$, $\#\{n\in\NN_{p_j}(u_n),\, \Init_{p_j}(u_n)=0\}<\infty;$
    \item For all $n\in\NN$, there exists a unique $j\in\{0,\ldots,\ell\}$ such that $p_j(u_n)=~0$.
\end{enumerate}
\end{definition}

\begin{example} Let $p\coloneqq\left(2 s_ns_{n+1}+s_n+s_{n+1}\right)\left(2s_{n+1} s_n-s_n-s_{n+1}\right)\in R_s$ with $\KK=\QQ$. With $u_0\coloneqq 1$, the following recursion deduced from $p$ defined the sequence $u\coloneqq ((-1)^n/(2n+1))_{n\in\NN}$.
\[
    u_{n+1}= \begin{cases}
                     \frac{u\left(n\right)}{2u\left(n\right)-1},\, \text{if}\,\,\, n\,\,\, \text{is odd,}\\
                     -\frac{u\left(n\right)}{2u\left(n\right)+1},\, \text{if}\,\,\, n\,\,\, \text{is even.}
                     \end{cases}
\]
This shows that $u$ is D-algebraic. Moreover, using the algorithm from \cite{teguia2024rational}, one shows that $u$ can also be defined with the rationalizing second-order difference polynomial $(2s_n+s_{n+1})\,s_{n+2}+s_ns_{n+1}$.
\end{example}

\section{Differentially algebraic functions}\label{sec:dalgfun}

\subsection{Linear to quadratic differential equations}

 Jointly with Koepf in \cite{TeguiaDelta2}, the author studied quadratic differential equations to extend the family of hypergeometric-type power series. This later led to a guessing algorithm tailored to solutions of differential equations of degree at most $2$ \cite{TBguessing}. As an introduction to our general approach, we consider one of the main examples in \cite{TBguessing}, namely the recovery of the generating function from Example \ref{ex:zeta0}.
 
Let $S\coloneqq \{\zeta(2j+2),j=0,\ldots,N\}, N\in\NN.$ We seek a differential polynomial $p\in R_y\coloneqq \FF_x\{y\}$ of degree at most $2$ such that $p(\sum_{n=0}^{\infty} \zeta(2n+2)x^n)=0$ and $p(0)=0$ (homogeneity).

There are essentially two minor complexities to address to derive an algorithm that extends guessing from linear to quadratic differential equations. First, we ought to define an ordering for the differential monomials in
\[Y_2\coloneqq \{y^{(i)}y^{(j)},\,(i,j)\in(\NN\cup \{-1\})^2 \}\coloneqq \{1,y,y^2,y',y'y',{y'}^2,y''\ldots\}.\]
We consider monomials of lower order as the lowest, and if the order of two monomials coincide, i.e., if $\ord(m_i)=\ord(m_j)=r,$ $m_i,m_j\in Y_2$, then we compare $m_i/y^{(r)}$ and $m_j/y^{(r)}$. This order, which we call {\it (differential) graded quadratic lexicographic ordering}, is total on $Y_2$. Thus, we can uniquely map any nonnegative integer to an element of $Y_2\setminus \{1\}$. This is explicitly constructed in \cite{TeguiaDelta2}. For convenience, we define $\delta_2$ as the map
\[\delta_2: (y,j) \mapsto (j+1)\text{st lowest element in }Y_2\setminus \{1\}.\]
For a given $\delta_2(y,j),$ $j$ is called the $\delta_2$-order. For example, the first five $\delta_2$ derivatives of $\ln(x)$ are $\ln\! \left(x\right),\ln\! \left(x\right)^{2},1/x,\ln\! \left(x\right)/x,1/x^2,$ where $y$ is replaced by $\ln(x)$ in $Y_2$.

We may now want to know what the order of $\delta_2(y,j)$ is. It is straightforward to see that for every $r$, there are $r+2$ $\delta_2$-derivatives of order $r$. 
\begin{prop}\label{prop:mdt} The maximum $j$ such that $\ord(\delta_2(y,j))=r$ is 
$$\mdt(r)\coloneqq 1+r(r+5)/2=\binom{r+3}{2}-2.$$
\end{prop}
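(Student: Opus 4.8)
The plan is to count, for each fixed order $r$, how many elements of $Y_2\setminus\{1\}$ have that order, and then to exploit the fact that the graded quadratic lexicographic ordering lists these elements block by block in order of increasing $r$. Since $\delta_2(y,j)$ is by definition the $(j+1)$st smallest element of $Y_2\setminus\{1\}$, the $\delta_2$-order $j$ equals the position of the monomial in this enumeration minus one; hence the largest $j$ realizing a given order $r$ is exactly (the number of monomials of order at most $r$) minus one. The whole argument therefore reduces to a counting identity plus careful bookkeeping of the indexing.

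First I would pin down the number of monomials of a prescribed order. A generator of $Y_2$ can be written canonically as $y^{(i)}y^{(r)}$ with $-1\le i\le r$, and then $\ord\bigl(y^{(i)}y^{(r)}\bigr)=\max(i,r)=r$; distinct values of $i$ give distinct monomials, and every monomial of order $r$ arises this way by placing its larger index as the second factor. Counting the admissible values $i\in\{-1,0,\ldots,r\}$ recovers the $r+2$ monomials of order $r$ already recorded in the text, the two extra beyond the naive $r$ coming from the convention $y^{(-1)}=1$ (the value $i=-1$) and from the diagonal case $i=r$.

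Next I would sum these block sizes. The number of elements of $Y_2\setminus\{1\}$ of order at most $r$ is
\[
\sum_{k=0}^{r}(k+2)=\frac{r(r+1)}{2}+2(r+1)=\frac{(r+1)(r+4)}{2}.
\]
Because the ordering is graded by order, these elements occupy exactly the initial segment of the enumeration, so the monomial of largest $\delta_2$-order among those of order $r$ sits in position $\tfrac{(r+1)(r+4)}{2}$. Converting from position to $\delta_2$-order by subtracting one gives
\[
\mdt(r)=\frac{(r+1)(r+4)}{2}-1=\frac{r^2+5r+2}{2}=1+\frac{r(r+5)}{2},
\]
and a one-line rewrite confirms the equivalent binomial form $\binom{r+3}{2}-2$. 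I expect no genuine obstacle here: the only delicate point is the bookkeeping around the index value $-1$ and the off-by-one between an element's position in the list and its $\delta_2$-order. To guard against an indexing slip I would sanity-check the formula on the small cases $r=0,1,2$, which predict $\mdt(r)=1,4,8$, against the explicit enumeration $y,\,y^2,\,y',\,yy',\,{y'}^2,\,y'',\ldots$ of $Y_2\setminus\{1\}$.
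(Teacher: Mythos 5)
Your proof is correct and follows essentially the same route as the paper, which relies on the observation that there are exactly $r+2$ $\delta_2$-derivatives of each order $r$ (counting $y^{(i)}y^{(r)}$ for $i\in\{-1,0,\ldots,r\}$), sums these block sizes over the graded ordering, and adjusts for the zero-based indexing of the $\delta_2$-order. Your verification on the small cases $r=0,1,2$ against the enumeration $y,\,y^2,\,y',\,y'y,\,{y'}^2,\,y'',\ldots$ matches the paper's conventions, so nothing further is needed.
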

In \cite{TeguiaDelta2}, there is a difference of $2$ in the value of $\mdt(r)$ coming from counts involving the $\delta_2$-derivatives of order $-1$ ($\delta_2(y,-1)\!=\!1$), but the idea remains the same. Thus for $r>0$, the minimal $\delta_2$-derivative of order $r$ is $\mdtm(r)\coloneqq~\mdt(r-1)+1$. 

Let $r_{\delta}$ be the $\delta_2$-order of $p$, and $d$ be the maximum degree of its polynomial coefficients $C_j(x),j=0,\ldots,r$. We have $p=\sum_{j=0}^{r_{\delta}}C_j(x)\delta_2(y,j)$. Thus, we need at least $(r_{\delta}+1)(d+1)$ linear constraints on the constant coefficients in the $C(x)$'s to have enough evidence to claim the existence of $p$. 

The second minor complexity is to relate each term in the complete expansion of $p$ to a term in the recurrence equation satisfied by a power series solution of $p=0$. This is necessary to link the order of $p$ with that of its recurrence, and thereby deduce a relation between $N$ and $(r_{\delta}+1)(d+1)$.
\begin{prop}\label{prop:rewrule2} Let $f=\sum_{n=0}^{\infty}s_nx^n$ be a power series zero of $p\in R_y$. The term of the recurrence equation of $(s_n)_{n\in\NN}$ corresponding to the differential monomial $x^i\delta_2(y,j)=x^iy^{(j_1)}y^{(j_2)}$ is 
\[\begin{cases}
&(n+1-i)_{j_1} s_{n+j_1-i}\quad \text{if } j_2=-1,\\[0.1mm]
&\sum_{k=0}^{n-i} (k+1)_{j_1}(n-i-k+1)_{j_2} s_{k+j_1} s_{n-i-k+j_2},\,\, \text{if }\, (j_1+1)(j_2+1)\neq 0.
\end{cases}\]
$(a)_n\coloneqq a(a+1)\cdots (a+n-1)$ is the Pochhammer symbol or rising factorial.
\end{prop}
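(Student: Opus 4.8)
The plan is to reduce everything to a single coefficient-extraction identity for one derivative, and then assemble the two cases by a direct index manipulation and a Cauchy product. The fundamental observation is that for any $j\in\NN$,
\[
f^{(j)}=\sum_{n=0}^{\infty}s_n\,\frac{d^j}{dx^j}x^n
       =\sum_{n=0}^{\infty}n(n-1)\cdots(n-j+1)\,s_n\,x^{n-j}.
\]
First I would note that for $0\le n<j$ the descending product $n(n-1)\cdots(n-j+1)$ contains the factor $0$ and hence vanishes, so the sum effectively starts at $n=j$. Reindexing by $m=n-j$ then gives the clean formula
\[
f^{(j)}=\sum_{m=0}^{\infty}(m+1)(m+2)\cdots(m+j)\,s_{m+j}\,x^m
       =\sum_{m=0}^{\infty}(m+1)_{j}\,s_{m+j}\,x^m,
\]
so that $[x^m]\,f^{(j)}=(m+1)_{j}\,s_{m+j}$, where $[x^m]$ denotes the coefficient of $x^m$. (The case $j=0$ returns $f$ via the empty product $(m+1)_0=1$.) This single identity is the engine of the whole proof.

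For the linear case $j_2=-1$, recall the convention $f^{(-1)}=1$ from Section \ref{sec:prelem}, so the monomial $x^i\,y^{(j_1)}y^{(-1)}$ evaluates to $x^i f^{(j_1)}$. Multiplying the formula above (with $j=j_1$) by $x^i$ shifts exponents by $i$, and extracting the coefficient of $x^n$, i.e.\ setting $m=n-i$, yields exactly $(n+1-i)_{j_1}\,s_{n+j_1-i}$, as claimed.

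For the quadratic case $(j_1+1)(j_2+1)\neq 0$, I would write the monomial evaluated at $f$ as $x^i\,f^{(j_1)}f^{(j_2)}$ and apply the Cauchy product to the two series $f^{(j_1)}=\sum_{k\ge 0}(k+1)_{j_1}s_{k+j_1}x^k$ and $f^{(j_2)}=\sum_{\ell\ge 0}(\ell+1)_{j_2}s_{\ell+j_2}x^\ell$. The coefficient of $x^{N}$ in the product is $\sum_{k=0}^{N}(k+1)_{j_1}(N-k+1)_{j_2}s_{k+j_1}s_{N-k+j_2}$; multiplying by $x^i$ and reading off $[x^n]$, i.e.\ taking $N=n-i$, reproduces the stated sum.

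The computation itself is routine; the only genuine care required is at the boundaries. The main (minor) obstacle is justifying the two points that make the single-derivative formula exact rather than merely formal: that the low-order terms with $n<j$ truly drop out (handled by the vanishing descending-factorial factor above), and that the $j_2=-1$ branch must be split off precisely because the convention $f^{(-1)}=1$ breaks the uniform coefficient formula, since $(\,\cdot\,)_{-1}$ is not a rising factorial and $y^{(-1)}$ contributes no index shift. Once these conventions are pinned down, matching the two displayed expressions term-by-term is immediate.
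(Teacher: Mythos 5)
Your proof is correct and follows exactly the route the paper has in mind: the paper omits an explicit proof, remarking only that these rewrite rules follow from the convolution (Cauchy product) of power series, which is precisely your coefficient-extraction identity $[x^m]\,f^{(j)}=(m+1)_j s_{m+j}$ combined with the Cauchy product and the shift by $x^i$. Your extra care at the boundaries (the vanishing descending factorials and the $f^{(-1)}=1$ convention forcing the case split) is a welcome tightening of what the paper leaves implicit.
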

\begin{corollary}\label{prop:boundre} Let $p\in R_y$. If $\ord(p)=r$ and $p$ has polynomial coefficients of degree at most $d$, then the order of the recurrence equation of its power series zero after applying Proposition \ref{prop:rewrule2} is at most $r+d$.
\end{corollary}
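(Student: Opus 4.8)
The plan is to bound the order of the recurrence directly from the rewrite rule in Proposition~\ref{prop:rewrule2}, tracking how high an index $s_{n+\ast}$ can appear in the image of each differential monomial of $p$. First I would write $p$ in its complete expansion as a $\KK[x]$-linear combination of quadratic differential monomials $x^i\,y^{(j_1)}y^{(j_2)}$ (allowing the linear case $j_2=-1$), where $0\le i\le d$ since the polynomial coefficients have degree at most $d$, and where $\max(j_1,j_2)\le r$ because $\ord(p)=r$. It suffices to show that each such monomial contributes a recurrence term whose highest shift index is at most $n+r+d$, since the recurrence of the power series zero is obtained by collecting the coefficient of $x^n$ in $p(f)$, and its order is the largest shift appearing minus $n$.

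The key step is then a term-by-term inspection of the two cases in Proposition~\ref{prop:rewrule2}. In the linear case the monomial $x^i\,y^{(j_1)}$ produces $(n+1-i)_{j_1}\,s_{n+j_1-i}$, whose index is $n+j_1-i\le n+r$, hence at most $n+r+d$. In the quadratic case the monomial $x^i\,y^{(j_1)}y^{(j_2)}$ produces the convolution sum $\sum_{k=0}^{n-i}(k+1)_{j_1}(n-i-k+1)_{j_2}\,s_{k+j_1}\,s_{n-i-k+j_2}$, where the two shift indices are $k+j_1$ and $n-i-k+j_2$. Here the first factor stays bounded by the summation range, but the decisive observation is that $s_{n-i-k+j_2}$ is maximized at $k=0$, giving index $n-i+j_2\le n+r$, while the companion factor at that extreme is $s_{j_1}$, of small index; symmetrically $s_{k+j_1}$ is maximized at $k=n-i$, giving $n-i+j_1\le n+r$. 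Thus in every term of the convolution the larger of the two shift indices is at most $n+r$, so certainly at most $n+r+d$.

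Finally I would conclude that after summing over all monomials of $p$ and extracting the coefficient of $x^n$, the highest shift appearing in the resulting relation for $(s_n)$ is bounded by $n+r+d$, so the recurrence has order at most $r+d$. I expect the main obstacle to be purely bookkeeping rather than conceptual: one must verify that the index $i$ from the polynomial coefficient genuinely only lowers the shift (it appears with a minus sign in both cases), so that retaining $d$ in the bound is a safe over-estimate coming from the coefficient of $x^n$ rather than from the derivative orders, and one must check that the Pochhammer prefactors do not vanish identically in a way that would drop the extremal term and improve the bound. Handling these edge cases carefully, and confirming that the extremal index $n+r$ is actually attained for a generic monomial so that the bound $r+d$ is the right quantity to state, is the delicate part of the argument.
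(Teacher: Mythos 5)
Your argument never establishes the statement, because it rests on a wrong definition of the order of the recurrence. You declare that the order ``is the largest shift appearing minus $n$,'' and then bound that largest shift by $n+r$. But the order of a recurrence equation is the \emph{spread} between its largest and smallest shifted indices (the Fibonacci relation $F_{n+2}=F_{n+1}+F_n$ has order $2$ because it spans the window from $n$ to $n+2$). Under your definition, your own intermediate bound would prove the stronger claim ``order $\leq r$,'' and that claim is false: take $p=y'-xy$, so $r=d=1$. Proposition \ref{prop:rewrule2} turns $p$ into $(n+1)s_{n+1}-s_{n-1}=0$, i.e.\ $(n+2)s_{n+2}=s_n$ after re-indexing --- a recurrence of order $2=r+d$, not $1$. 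The point you set aside as bookkeeping, namely that the exponent $i\leq d$ ``only lowers the shift'' so that ``retaining $d$ in the bound is a safe over-estimate,'' is exactly backwards: the $+d$ is not slack added on top of a true bound $r$; it is genuinely contributed by the \emph{lowest} sliding index $n+j_1-i\geq n-d$, attained by monomials such as $x^d y$, and it is what makes the stated bound tight.

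The argument the corollary actually requires (and which the paper leaves implicit, as an immediate reading of Proposition \ref{prop:rewrule2}) bounds the index window from both sides. A linear monomial $x^iy^{(j_1)}$ contributes the index $n+j_1-i$ with $-d\leq j_1-i\leq r$, and a quadratic monomial $x^iy^{(j_1)}y^{(j_2)}$ contributes sliding indices $k+j_1$ and $n-i-k+j_2$, which never exceed $n-i+\max(j_1,j_2)\leq n+r$; hence every sliding index lies in $[n-d,\,n+r]$, a window of width $r+d$, which is the order of the recurrence. This two-sided bound is also what the paper's own usage presupposes: it is why the recurrence is instantiated for $n=0,\ldots,N-(r+d)$, why zero padding $s_{-r-d}=\cdots=s_{-1}=0$ is needed for negative indices, and why the term $c_{0,1}s_{n-1}$ (index $n-d$) appears alongside $c_{5,0}(n+1)(n+2)s_{n+2}$ (index $n+r$) in the $\zeta(2n+2)$ example. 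Your proof addresses only the upper end of this window, so the $d$ in your final bound has no justification within your argument.
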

Note that the bound in Proposition \ref{prop:boundre} is independent of $\delta_2$ and can therefore be used for the general algorithm in the next subsection. 

If $\ord(p)=r$, then using $S$ we can instantiate the recurrence equation for $n=0,\ldots,N-(r+d)$, thus obtaining $N-(r+d)+1$ equations with at least $(\mdtm(r)+1)(d+1)$ unknowns. Therefore choosing $\mdtm(r)$ and $d$ such that 
\begin{equation}\label{eq:d2cond}
    \left(\binom{r+2}{2}+1\right)(d+1)+r\leq N+2,
\end{equation}
yields an overdetermined linear system, which is not expected to have a nontrivial solution. If it does, then the corresponding $p$ is a candidate differential polynomial that vanishes at $\sum_{n=0}^{\infty}\zeta(2n+2)x^n$. It is interesting to notice that $\delta_2$ can be defined (or implemented) in such a way that \eqref{eq:d2cond} reads $\binom{r+2}{2}(d+1)+r\leq N+2.$

We consider an algorithm that iterates on the $\delta_2$ order of the $p$ sought. With $d=1$ and $N=14$, the maximum $r$ for \eqref{eq:d2cond} to hold is $r=2$ (we get an equality). Thus, the algorithm iterates from $r=0$ to $r=2$, corresponding to iterations from the $\delta_2$-order $r_{\delta}=0$ to $r_{\delta}=\mdtm(2)=5.$ At the last iteration, we have an ansatz equation of the form
\[(c_{0,0}+c_{0,1}x)y+(c_{1,0}+c_{1,1}x)y^2+\cdots+(c_{4,0}+c_{4,1}x)y'y+(c_{5,0}+c_{5,1}x)y''=0,\]
whose corresponding recurrence equation is given by
\begin{dmath*}
c_{0,0} s_n+c_{0,1} s_{n-1}+c_{1,0} \sum_{k=0}^n\! s_k s_{n-k}+c_{1,1} \sum_{k=0}^{n-1}\! s_k s_{n-k-1}+ \cdots +c_{4,0} \sum_{k=0}^n\! (k+1)(n-k+1) s_{k+1} s_{n-k+1}+c_{4,1} \sum_{k=0}^{n-1}\! (k+1)(n-k) s_{k+1} s_{n-k}+c_{5,0} (n+1) (n+2) s_{n+2}+c_{5,1} n (n+1) s_{n+1}=0.
\end{dmath*}
The latter is evaluated for $n=0\ldots11,$ the values of $s_n$ from $S,$ and $s_n=~0$ if $n<0$. Solving the resulting system, this algorithm predicts that all constant multiples of the differential polynomial in Example \ref{ex:zeta0} vanish at $\sum_{n=0}^{\infty}\zeta(2n+2)x^n$. Solving the associated differential equation with the initial values in $S$ concludes the recovery of the generating function. This is, of course, not a proof, but once we know a function whose initial series coefficients are those in $S,$ it becomes straightforward to verify its correctness.

\begin{remark}\label{remk:rmk1} Although the condition in \eqref{eq:d2cond} does not hold for $N=12$ with $r=2$ and $d=1$, $13$ initial terms suffices to recover the correct differential equation. The idea is to anticipate that $c_{5,0}=0$ is a possibility for the correct equation, and suppose that there are enough data for evaluation up to $n=11$. Thus, obtaining a linear system whose one of the equations is
\[\frac{236364091\pi^{24}}{201919571963756521875}c_{0,0}+\cdots+156s_{13}c_{5,0}+\frac{5263448\pi^{26}}{336196423516078125}c_{5,1}=0.\]
This is often exploited in our implementation. Note, however, that $13$ is not the minimum; as we shall see next, the same result can be obtained with $9$ terms only.
\end{remark}

\subsection{Bounded-degree ADEs}

Let $k\in\NN\setminus \{0\}$, and  $S\coloneqq\{s_{n_0},s_{n_0+1},\ldots,s_{N}\},$ $S\subset \KK,$ $n_0,N\in\NN,$ be a finite set of consecutive terms of a sequence $(s_n)_{n\in\NN}.$ Let $f$ be the generating function of $(s_n)_{n\in\NN}$. We want to find a candidate differential polynomial $p\in R_y$ of degree at most $k$ such that $p(f)=0$. Without loss of generality, we take $n_0=0$. For if $n_0>0$, the algorithm would seek $p$ such that $p(g)=0$, where $g=(f-\sum_{n=0}^{n_0-1}s_nx^n)/x^{n_0}$.

Let 
\begin{align*}
Y_k&\coloneqq \left\{\prod_{j=j_1}^{j_k} y^{(j)},\, (j_1,\ldots,j_k)\in(\NN\cup \{-1\})^k\right\}\\
   &= \left\{1,y,y^2,\ldots,y^k,y',y'y,\ldots,y'y^{k-1},{y'}^2y,\ldots\right\}.
\end{align*}
\begin{definition} Let $\succ_k$ be the order in $Y_k$ such that for any $t_1,t_2\in Y_k$, we have $t_1\succ_k t_2$ if and only if $\ord(t_1)\!\! > \!\! \ord(t_2)$; or $\ord(t_1/\gcd(t_1,t_2))\!\! > \!\! \ord(t_2/\gcd(t_1,t_2))$. We refer to this order as the (differential) graded degree-$k$ lexicographic ordering.
\end{definition}

Since any subset of $Y_k$ has a least element w.r.t. $\succ_k$, $(Y_k,\succ_k)$ is a well-ordered set. This makes the set $Y_k$ compatible with iterative algorithms.
\begin{definition} For any nonnegative integer $j$, we define the $n$th $\delta_k$-derivative of $y$ as the $(j+1)$st minimal element of $Y_k\setminus \{1\}$ w.r.t. $\succ_k.$
\end{definition}
It is straightforward to see that there are $\binom{r+k}{k+1}$ $\delta_k$-derivatives of order $r$. Hence, we can generalize Proposition \ref{prop:mdt} as follows.

\begin{prop}\label{prop:maxmind} Let $r\in\NN$, and $\mdk(r)$ and $\mdkm(r)$ be the maximum and the minimum integers $j$ such that $\ord(\delta_k(y,j))=r,$ respectively. We have $\mdkm(0)=0$, and
\[\begin{split}
  \mdk(r)  & \!\coloneqq\!  \binom{k+r+1}{k}-2,\\
  \mdkm(r) & \!\coloneqq\!  1+\mdk(r-1)\,\,\,\, (r>0).
\end{split}\]
\end{prop}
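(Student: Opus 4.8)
The plan is to establish the two parts of Proposition \ref{prop:maxmind} by first pinning down the counting fact stated just before it, namely that there are exactly $\binom{r+k}{k+1}$ $\delta_k$-derivatives of order $r$, and then performing a cumulative sum. A $\delta_k$-derivative is a monomial $\prod_{j=j_1}^{j_k} y^{(j)}$ of total degree exactly $k$ in the variables $y^{(-1)}=1, y, y', y'',\ldots$, and by the definition of $\succ_k$ its order is the largest superscript $r$ that appears. So first I would count, for fixed $r$, the monomials of degree $k$ in $\{1,y,y',\ldots,y^{(r)}\}$ whose order is exactly $r$, i.e. in which $y^{(r)}$ genuinely appears. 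A monomial of order \emph{at most} $r$ is a degree-$k$ multiset drawn from the $r+2$ symbols $\{1,y,\ldots,y^{(r)}\}$ (recall $y^{(-1)}=1$ is allowed), and the number of such multisets is $\binom{(r+2)+k-1}{k}=\binom{r+k+1}{k}$. Those of order \emph{strictly less} than $r$ are degree-$k$ multisets from the $r+1$ symbols $\{1,\ldots,y^{(r-1)}\}$, numbering $\binom{r+k}{k}$. Subtracting gives the count of order exactly $r$:
\[
\binom{r+k+1}{k}-\binom{r+k}{k}=\binom{r+k}{k-1}=\binom{r+k}{k+1}\cdot\frac{\cdots}{\cdots},
\]
and I would reconcile this with the stated $\binom{r+k}{k+1}$ via the standard identity $\binom{r+k}{k-1}=\binom{r+k}{r+1}$, checking that the indexing convention in the paper (degree exactly $k$, i.e.\ $k$ factors counting repetitions) matches. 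This first counting step is where the bookkeeping must be done carefully, since the inclusion of $y^{(-1)}=1$ shifts the number of available symbols.

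Next I would assemble $\mdk(r)$ by summing the per-order counts. Since the $\delta_k$-derivatives are listed in increasing $\succ_k$-order and indexed starting from $j=0$ at the first element of $Y_k\setminus\{1\}$, the maximum index $j$ attained among derivatives of order $r$ equals the total number of $\delta_k$-derivatives of order at most $r$, minus one (for zero-based indexing), minus one more (to exclude the monomial $1$, which is $\delta_k(y,-1)$ and not counted in $Y_k\setminus\{1\}$). The total number of monomials of order at most $r$ is the multiset count $\binom{r+k+1}{k}$ computed above, so
\[
\mdk(r)=\binom{r+k+1}{k}-2,
\]
which is exactly the claimed closed form. The $-2$ therefore has a transparent origin: $-1$ for zero-based indexing and $-1$ for dropping the constant monomial, precisely mirroring the remark after Proposition \ref{prop:mdt} that the discrepancy of $2$ comes from the order-$(-1)$ derivative. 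Specializing to $k=2$ recovers $\mdt(r)=\binom{r+3}{2}-2$, which I would note as a consistency check.

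Finally, the formula for the minimum index is immediate from the ordering: because $\succ_k$ is graded by order, every $\delta_k$-derivative of order $r$ strictly exceeds every one of order $r-1$, so the block of order-$r$ derivatives occupies consecutive indices, and its smallest index is one more than the largest index of the preceding block. Hence $\mdkm(r)=1+\mdk(r-1)$ for $r>0$. The base case $\mdkm(0)=0$ holds because the smallest element of $Y_k\setminus\{1\}$ is $y$, which has order $0$ and receives index $0$. I expect the only real obstacle to be the combinatorial identity reconciling $\binom{r+k}{k-1}$ with the paper's $\binom{r+k}{k+1}$ and ensuring the convention for $y^{(-1)}=1$ is threaded consistently through the multiset counts; once that is settled, both displayed formulas follow by the telescoping/cumulative-sum argument with no further computation.
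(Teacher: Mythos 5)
Your proof is correct and takes essentially the argument the paper leaves implicit (the paper treats the proposition as a ``straightforward'' consequence of counting): monomials of degree at most $k$ and order at most $r$ are exactly the size-$k$ multisets from the $r+2$ symbols $\{1,y,y',\ldots,y^{(r)}\}$, giving $\binom{r+k+1}{k}$ of them; subtracting $2$ (one for excluding the monomial $1$, one for zero-based indexing) gives $\mdk(r)$, and the gradedness of $\succ_k$ by order gives $\mdkm(r)=1+\mdk(r-1)$ and $\mdkm(0)=0$.

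One point where you hedge but should instead assert a correction: your per-order count
$\binom{r+k+1}{k}-\binom{r+k}{k}=\binom{r+k}{k-1}=\binom{r+k}{r+1}$
is the correct one, and the paper's assertion that there are $\binom{r+k}{k+1}$ $\delta_k$-derivatives of order $r$ is a typo; no indexing convention or binomial identity reconciles them, since $\binom{r+k}{k+1}=\binom{r+k}{r-1}\neq\binom{r+k}{r+1}$ in general. A concrete check: for $k=2$, $r=1$ the paper's expression gives $\binom{3}{3}=1$, whereas Proposition \ref{prop:mdt} (correctly) counts $r+2=3$ derivatives of order $1$, matching your $\binom{r+2}{1}$. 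The incomplete display you wrote, ending in $\binom{r+k}{k+1}\cdot\frac{\cdots}{\cdots}$, cannot be completed. None of this affects your proof of the proposition itself, because your derivation of $\mdk(r)$ uses the cumulative multiset count $\binom{r+k+1}{k}$ directly rather than summing the per-order counts; indeed the proposition's formula confirms that the correct per-order count is $\mdk(r)-\mdk(r-1)=\binom{r+k+1}{k}-\binom{r+k}{k}=\binom{r+k}{k-1}$.
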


Using the convolution of series (or Cauchy product), we can establish the analogue of Proposition \ref{prop:rewrule2} for the conversion of differential equations into recurrence relations for their power series solutions. Since the explicit writing of these rewrite rules offers little insight to the algorithm, we choose to omit their presentation.

Using the same notation from the previous subsection, we write the ansatz $p=\sum_{j=0}^{r_{\delta}}C_j(x)\delta_k(y,j)$, where the $C(x)$'s are polynomial coefficients of degree at most $d\in\NN$.

\begin{lemma}\label{lem:dkcond} Let $p$ be an ansatz of order $r,$ degree $k$, and polynomial coefficients of degree at most $d$. If the linear system resulting from instantiating the recurrence equation of power series zeros of $p=0$ with the terms in $S$ is squared or overdetermined, then the following condition holds.
\begin{equation}\label{eq:dkcond}
    \left(\binom{r+k}{k}+1\right)(d+1)+r\leq N+2.
\end{equation}
\end{lemma}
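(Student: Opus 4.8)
The plan is to derive \eqref{eq:dkcond} purely by comparing the number of scalar unknowns carried by the ansatz with the number of linear equations that the finite data $S$ can supply; the hypothesis that the system is square or overdetermined is precisely the statement that the former does not exceed the latter. Normalizing $n_0=0$ as in the preceding discussion, $S$ consists of the $N+1$ consecutive terms $s_0,\ldots,s_N$.

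First I would count the unknowns. The ansatz of order $r$ is $p=\sum_{j=0}^{r_\delta}C_j(x)\delta_k(y,j)$ taken at the first $\delta_k$-order reaching differential order $r$, namely $r_\delta=\mdkm(r)$, exactly as in the last iteration of the $k=2$ template; it thus comprises every $\delta_k$-derivative of order at most $r-1$ together with the single minimal $\delta_k$-derivative of order $r$. By Proposition \ref{prop:maxmind} the number of these monomials is $\mdkm(r)+1=\binom{r+k}{k}$. Since each polynomial coefficient $C_j(x)$ has degree at most $d$ and so contributes $d+1$ scalar unknowns, the ansatz carries exactly $\binom{r+k}{k}(d+1)$ unknowns.

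Next I would count the equations. Instantiating $p(f)=0$ and extracting the coefficient of $x^n$ turns $p=0$ into a recurrence for $(s_n)$ via the degree-$k$ analogue of Proposition \ref{prop:rewrule2}; by Corollary \ref{prop:boundre} this recurrence has order at most $r+d$, a bound noted to be independent of the chosen ordering. Hence every instantiation at $n=0,\ldots,N-(r+d)$ involves only terms of $S$, yielding $N-(r+d)+1$ linear equations in the unknowns above. The system is square or overdetermined precisely when $N-(r+d)+1\ge \binom{r+k}{k}(d+1)$; rearranging gives $\binom{r+k}{k}(d+1)+r+d+1\le N+2$, i.e. $\bigl(\binom{r+k}{k}+1\bigr)(d+1)+r\le N+2$, which is \eqref{eq:dkcond}.

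I expect the only genuine subtlety to be the monomial bookkeeping: one must use the count $\binom{r+k}{k}$ coming from the minimal order-$r$ ansatz ($r_\delta=\mdkm(r)$) rather than the larger count $\mdk(r)+1$ of all monomials of order at most $r$, and must track correctly the off-by-one between the $\delta_k$-order index $j$ and the differential order through Proposition \ref{prop:maxmind}. A secondary point to verify is that the unwritten degree-$k$ rewrite rules do not raise the recurrence order beyond the $r+d$ of Corollary \ref{prop:boundre}; since that bound is established independently of the ordering $\delta_k$, it transfers verbatim and the argument closes.
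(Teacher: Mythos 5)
Your proposal is correct and follows essentially the same route as the paper: the lemma is the degree-$k$ generalization of the counting argument given explicitly in the $\delta_2$ discussion, where the minimal order-$r$ ansatz carries $(\mdkm(r)+1)(d+1)=\binom{r+k}{k}(d+1)$ unknowns, the recurrence of order at most $r+d$ (Corollary \ref{prop:boundre}) yields $N-(r+d)+1$ equations from $S$, and comparing the two counts rearranges to \eqref{eq:dkcond}. Your closing remarks on using $\mdkm(r)$ rather than $\mdk(r)$ (so that larger ansätze only strengthen the inequality) and on the order bound being independent of the ordering are exactly the right points of care.
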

We are now in a position to state the core of our algorithm. Note that the complete method is defined by its combination of Algorithm \ref{algo:Algo1}, and the discussions preceding its statement. In subsequent examples and applications, references to Algorithm \ref{algo:Algo2} should be understood to mean this complete method.

\begin{algorithm}[ht]\caption{Guessing D-algebraic functions.}\label{algo:Algo2}
    \begin{algorithmic} 
    \\ \Require $S=\{s_0,\ldots,s_N\}\subset \KK, d, k\in\NN.$
    \Ensure A basis of differential polynomials $p$ of degree at most $k$, with polynomial coefficients of degree at most $d$, such that $p\left(\sum_{n=0}^{N} s_n x^n\right) = 0$ and $p(0)=0$; or ``None'' (no ADEs found).
    \begin{enumerate}
    \item Let $r_{\max}$ be the maximum $r$ such that \eqref{eq:dkcond} holds, and $r_{\min}$ the starting order (usually $0$). 
    \item For $r_{\delta}=\mdk(r_{\min}),\mdk(r_{\min})+1,\ldots,\mdkm(r_{\max})$ do  
    \begin{enumerate}
        \item Let $p=\sum_{j=0}^{r_{\delta}}C_j(x)\delta_k(y,j)$, where $C_j(x)\in\KK[x]$ is of degree $d$ with undetermined coefficients, and let $r=\ord(p)$.
        \item Let $p_{\text{rec}}$ be the recurrence equation resulting from the conversion of $p=0$ for the power series solution $\sum_{n=0}^{\infty}s_nx^n$.


            \item Evaluate $p_{\text{rec}}$ for $n=0,\ldots,(d+1)(r_{\delta}+1)$ with the initial values in $\{s_{-r-d},\ldots,s_{-1}\}\cup S$, where $s_{-r-d}=\cdots=s_{-1}=0$. This produces an overdetermined linear system in the unknown coefficients from the $C(x)$'s.
            \item If the system has a nontrivial solution, verify that $p_{\text{rec}}$ evaluates to zero for $n=(d+1)(r_{\delta}+1)+1,\ldots, N-(r_{\max}+d)$. If it does, then stop and return $p$ with the unknown constant coefficients replaced by their corresponding value in the general vector solution.
        \end{enumerate}
    \item Return Algorithm \ref{algo:Algo1} $(S,d,k,\max\{r_{\min},r_{\max}\})$.
    \end{enumerate}	
    \end{algorithmic}
\end{algorithm}

The general vector solution mentioned in step (2)-(d) of Algorithm \ref{algo:Algo2} represents a vector space. Here, we consider the solution to be an arbitrary linear combination of a basis in that space. We also comment that the starting order $r_{\min}\leq r_{\max}$ may be optionally specified in the input of Algorithm \ref{algo:Algo2}. The algorithm would skip step (2) if $r_{\min}>r_{\max}$, and use $r_{\min}$ in step (3).

The objective of Algorithm \ref{algo:Algo1} is to use the data in $S$ to seek a differential polynomial with the same characteristic as Algorithm \ref{algo:Algo2}, but with a smaller number of unknowns and a $\delta_k$-order higher than $\mdkm(r_{\max})$. It follows from Lemma \ref{lem:dkcond} that there is an integer $\rd\in (\mdkm\!(r_{\max}),\,\mdkm\!(r_{\max}+1)]$ for which the linear system to be solved becomes underdetermined. We choose to determine $\rd$ algorithmically by examining the resulting linear system while incrementing the $\delta_k$-order.

Consider the ansatz $p=\sum_{j=0}^{\rd}C_j(x)\delta_k(y,j)$, such that $\deg_x(C_j(x))=d_j\in\NN.$ From the preceding steps of Algorithm \ref{algo:Algo2}, we know that no solution has been found for $\delta_k$-order $\rd-1$ with $d_j=d,$ $0\leq j\leq \rd$. Since $\rd$ yields an underdetermined system with $d_{\rd}=d$, it follows that there is $\ld\in[0,d)\cap \NN$ such that for $d_{\rd}=\ld$ and $d_j=d,$ $0\leq j\leq \rd$, the linear system is square. In Algorithm \ref{algo:Algo1}, we denote by $\Dd\coloneqq \rd d+\ld,$ the sum of degrees of polynomial coefficients corresponding to this square linear system, and consider the set
\begin{equation*}
    \sdr(d,N,r_{\max})\coloneqq \left\lbrace (d_0,\ldots,d_{\rd})\in\NN^{\rd+1},
    \begin{cases} 
    \max\{d_0,\ldots,d_{\rd}\}=d,\,\text{ and }\\ 
    \sum_{j=0}^{\rd}d_j=\Dd\leq N-(r_{\max}+d)
    \end{cases}
    \right\rbrace.
\end{equation*}

 This set includes all remaining cases where the data in $S$ yields a square or overdetermined linear system for the coefficients of a differential polynomial of $\delta_k$-order $\rd$ with polynomial coefficients of degree at most $d$. Indeed, from the definition of  $\Dd$, it follows that any tuple $t_d$ of polynomial degrees excluded from $\sdr(d,N,r_{\max})$ either results in an underdetermined linear system or defines a system whose solution space is contained within that of a system produced by a tuple in $\sdr(d,N,r_{\max})$.

 In general, we define $\sdr(d,N,r)$ for a given $r$ with corresponding $\rd$ and $\Dd$, as the set that contains all possible degrees of polynomial coefficients with maximum degree $d$ such that the ansatz $p$ leads to a square or overdetermined linear system with the data in $S$. Using classical results on integer partitions, one verifies that $\#\sdr(d,N,r) =\rd\binom{\Dd-d+\rd-2}{\rd-2}$.

\begin{algorithm}[ht]\caption{Guessing D-algebraic functions with a fixed order.}\label{algo:Algo1}
    \begin{algorithmic} 
    \\ \Require $S=\{s_0,\ldots,s_N\}\subset \KK, r, d, k\in\NN.$
    \Ensure A basis of differential polynomials $p$ of order $r$ and degree at most $k$, with polynomial coefficients of degree at most $d$, such that $p\left(\sum_{n=0}^{N} s_n x^n\right) = 0$ and $p(0)=0$; or ``None''.
    \begin{enumerate}
    \item Find $\rd\in[\mdkm(r),\, \mdkm\!(r+1)]\cap \NN$ such that the linear system in step $(2)-(c)$ of Algorithm \ref{algo:Algo2} is underdetermined. This may be done by pursuing the computations of step $(2)$ in Algorithm \ref{algo:Algo2} while incrementing the $\delta_k$-order $\rda$. If a solution is found, it would be returned accordingly; otherwise, $\rd$ is found and the algorithm proceeds.
    \item For $(d_0,\ldots,d_{\rd})\in\sdr(d,N,r)$ do
    \begin{itemize}
        \item Apply steps (a) to (d) from step $(2)$ of Algorithm \ref{algo:Algo2} using $p=~\sum_{j=0}^{\rd}C_j(x)\delta_k(y,j),$ with $\deg_x(C_j(x))=d_j$ and $r_{\max}=r$.
    \end{itemize}
    \item Return ``None''.
    \end{enumerate}	
    \end{algorithmic}
\end{algorithm}

\begin{remark} Due to the loop in step $(2)$, the runtime of Algorithm \ref{algo:Algo1} can be high. In practice, this step is implemented as an option. We may prefer to increase $N$ (find more data) rather than enabling its execution, unless there is no better option or the expected order and degree are relatively small.
\end{remark}

Algorithm \ref{algo:Algo1} is best exploited when a desired minimal order $r_{\min}> r_{\max}$ is specified, where $r_{\max}$ is defined in Algorithm \ref{algo:Algo2}. In that case, Algorithm \ref{algo:Algo1} is executed with $\rd=\mdkm(r_{\min})$. This makes Algorithm \ref{algo:Algo1} seek results that it would not try to obtain when called with orders $r\leq r_{\max}.$

Let us further analyze the guessing of the generating function from the previous subsection. The differential polynomial obtained is of $\delta_2$-order $5$. The polynomial coefficients are of degrees $-\infty$, $0,0,1$, $-\infty$, and $1$, listed in increasing order w.r.t. $\succ_2$, where $\deg_x(0)=-\infty$. Viewing the zero polynomial as a constant polynomial with a zero coefficient implies that it requires one unknown to be determined. In this sense, for the generating function of $(\zeta(2n+2))_{n\in\NN}$, the ``necessary'' data for its recovery is estimated to be $1+1+1+2+1+2=8.$ This is one less than the minimum amount of data required by our algorithm. Indeed, calling Algorithm \ref{algo:Algo2} with $d=1,k=2,r_{\min}=2,$ and $N+1=9$ initial terms, and exploiting the idea of Remark \ref{remk:rmk1} to suppose $s_9$ given, suffices to recover the correct differential polynomial. 

\noindent Nevertheless, in this example, we view the optimal bound for the amount of data required to recover the correct differential polynomial as $6$, the number of nonzero coefficients.  

\begin{definition} Let $f$ be a D-algebraic function defined by $p\in R_y$, $k=\deg(p),$ and $\rda$ be the $\delta_k$-order of $p$. Assuming $\deg_x(0)=-1,$ we define the optimal bound of the amount of data required to recover $f$ as $\sum_{j=0}^{\rda} (d_j+1),$ where $d_j$ is the degree of the polynomial coefficient in front of $\delta_k(y,j)$ in $p.$ We define the algorithmic bound with the same formula, assuming $\deg_x(0)=0.$
\end{definition}

\begin{definition} Suppose that $p_c\in R_y$ is a differential polynomial returned by Algorithm \ref{algo:Algo2} or Algorithm \ref{algo:Algo1}, where $\KK=\QQ(c_0,c_1,\ldots,c_{\ell})$. We can write 
\[p_c=\sum_{j=0}^{\ell} c_j p_j,\,\, p_j\in (\QQ(x),\partial_x)\{y\}.\]
We say that Algorithm \ref{algo:Algo2} or Algorithm \ref{algo:Algo1} recovers a D-algebraic function $f$ defined by a differential polynomial $p\in R_y$, if $p\in \langle p_0,\ldots, p_{\ell}\rangle.$
\end{definition}

\noindent We summarize the discussion of this section in the following two theorems.

\begin{theorem}\label{th:dkfinite} Algorithm \ref{algo:Algo2} is correct. Specifically, let $f$ be a D-algebraic function defined by a differential polynomial $p$ of degree $k$ and order $r$, with polynomial coefficients of degree at most $d$. Let $\tilde{f}=\sum_{n=0}^Ns_nx^n$ be a truncation of the power series of $f$. If Algorithm \ref{algo:Algo2} recovers $f$ from $\tilde{f}$, then the condition of Lemma \ref{lem:dkcond} holds.
\end{theorem}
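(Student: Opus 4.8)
The plan is to establish correctness by unwinding the two claims bundled into the theorem: first that Algorithm \ref{algo:Algo2} terminates and returns only differential polynomials that vanish at the input truncation, and second the implication that whenever the algorithm recovers the true defining polynomial $p$, the data-size condition \eqref{eq:dkcond} of Lemma \ref{lem:dkcond} must hold. I would treat the second as the substantive content, since the first (validity of the output, termination of the bounded loop) follows directly from the construction: every candidate returned is tested against $p_{\text{rec}}$ for all admissible $n$, so $p(\tilde f)=0$ by design, and the loop over $r_\delta$ from $\mdk(r_{\min})$ to $\mdkm(r_{\max})$ is finite.

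For the main implication, I would argue contrapositively in spirit but directly in practice. Suppose $f$ is defined by $p$ of degree $k$, order $r$, and polynomial-coefficient degree at most $d$, and that the algorithm recovers $f$, meaning $p\in\langle p_0,\ldots,p_\ell\rangle$ where the $p_j$ span the returned solution space. Recovery at a given iteration requires that the overdetermined linear system built in step (2)-(c) admit a nontrivial solution containing $p$ up to the differential ideal, which in turn forces that iteration to have been reached with a $\delta_k$-order large enough to include every differential monomial appearing in $p$. Since $p$ has order $r$ and degree $k$, its monomials lie at or below $\delta_k$-order $\mdk(r)$, so the recovering iteration has $r_\delta\ge \mtk$--type index and the ansatz genuinely contains $p$. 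The key quantitative step is then to count: the number of constant unknowns across the coefficients $C_0(x),\ldots,C_{r_\delta}(x)$ is at least $\bigl(\binom{r+k}{k}+1\bigr)(d+1)$ once we account for all $\delta_k$-derivatives of order up to $r$ together with degree-$d$ coefficients, while the number of usable equations is controlled by $N$ through Corollary \ref{prop:boundre}, which caps the recurrence order at $r+d$. Requiring the system to be square or overdetermined — a necessary precondition for the algorithm to certify rather than merely guess — is exactly the hypothesis of Lemma \ref{lem:dkcond}, and so its conclusion \eqref{eq:dkcond} follows.

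Concretely, I would make the equation-count rigorous by invoking Corollary \ref{prop:boundre}: the recurrence $p_{\text{rec}}$ has order at most $r+d$, so instantiating it at $n=0,\ldots,N-(r+d)$ yields $N-(r+d)+1$ equations, while the unknown count is $\bigl(\binom{r+k}{k}+1\bigr)(d+1)$ by Proposition \ref{prop:maxmind} (the $\mdkm(r)+1 = \binom{r+k}{k}+1$ derivatives of order $\le r$, each with $d+1$ coefficient unknowns). Demanding equations $\ge$ unknowns and rearranging gives precisely \eqref{eq:dkcond}. I expect the main obstacle to be the careful bookkeeping tying ``recovery'' to a specific iteration: one must verify that $p\in\langle p_0,\ldots,p_\ell\rangle$ genuinely implies the recovering ansatz had $\delta_k$-order at least $\mdk(r)$ rather than some smaller order whose solution space accidentally contains $p$ only after taking derivatives in the differential ideal. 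Resolving this requires arguing that the returned basis $p_0,\ldots,p_\ell$ consists of differential polynomials of $\delta_k$-order matching the iteration at which they were found, so membership of $p$ in their differential ideal, combined with $p$ having order exactly $r$, pins down the relevant order and hence the unknown count feeding the inequality.
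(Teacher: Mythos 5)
Your overall route coincides with the paper's: Theorem \ref{th:dkfinite} carries no standalone proof (the paper states it ``summarizes the discussion of this section''), and that discussion is exactly the count you perform --- $N-(r+d)+1$ equations via Corollary \ref{prop:boundre} against the unknown constants of an order-$r$, degree-$\leq d$ ansatz, with Lemma \ref{lem:dkcond} converting ``square or overdetermined'' into \eqref{eq:dkcond} --- combined with the design of Algorithm \ref{algo:Algo2}, which only iterates over $\delta_k$-orders up to $\mdkm(r_{\max})$ with $r_{\max}$ maximal satisfying \eqref{eq:dkcond}. One bookkeeping slip: by Proposition \ref{prop:maxmind}, $\mdkm(r)+1=\binom{r+k}{k}$, not $\binom{r+k}{k}+1$, so the minimal order-$r$ ansatz has $\binom{r+k}{k}(d+1)$ unknowns; the ``$+1$'' inside \eqref{eq:dkcond} is created when one rearranges $N-(r+d)+1\geq\binom{r+k}{k}(d+1)$ by absorbing a stray $d+1$, not by an extra $\delta_k$-derivative. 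With your inflated count, ``equations $\geq$ unknowns'' rearranges to a strictly stronger inequality, so it does not give ``precisely'' \eqref{eq:dkcond}.

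The genuine gap is your resolution of what you correctly single out as the main obstacle. You claim that $p\in\langle p_0,\ldots,p_\ell\rangle$ together with $\ord(p)=r$ pins the recovering iteration down to an ansatz large enough to contain $p$'s monomials. It does not: $\langle p_0,\ldots,p_\ell\rangle$ is a \emph{differential} ideal, closed under $\partial_x$, so it contains elements of every order above that of its generators. Concretely, run Algorithm \ref{algo:Algo2} on $f=e^x$ with $k=1$, $d=0$, $N=3$: then $r_{\max}=1$ and the algorithm returns $p_0=y'-y$; but $f$ is also defined by $p=y''-y'$ (order $2$, degree $1$, constant coefficients, $\sep_p=1\neq 0$), and $p=\partial_x(p_0)\in\langle p_0\rangle$, so by the paper's definition the algorithm ``recovers'' $f$ as defined by this $p$, while \eqref{eq:dkcond} for $r=2$ reads $6\leq 5$ and fails. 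Hence the order of an arbitrarily chosen defining polynomial does not determine the order of the recovering ansatz, and your argument cannot be repaired along these lines. What the algorithm's construction actually yields is the statement for the polynomial found at the recovering iteration: its order $r'$ satisfies $r'\leq r_{\max}$, and since the left-hand side of \eqref{eq:dkcond} is increasing in $r$, the condition holds for $r'$. That weaker reading is what the section's discussion (and your counting, once corrected) supports; the stronger literal reading you set out to prove admits the counterexample above, and the paper itself supplies no argument bridging the two.
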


\begin{theorem}\label{th:dkfinitedata} Algorithm \ref{algo:Algo1} is correct. Specifically, let $f$ be a D-algebraic function defined by a differential polynomial $p$ of degree $k$ and order $r$, with polynomial coefficients of degrees $d_0,\ldots,d_{\rda}\leq d$. Let $\tilde{f}=\sum_{n=0}^Ns_nx^n$ be a truncation of the power series of $f$. Algorithm \ref{algo:Algo1} recovers $f$ from $\tilde{f}$ if and only if $(d_0,\ldots,d_{\rda})\in\sdr(d,N,r)$.
\end{theorem}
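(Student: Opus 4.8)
The plan is to convert the statement into one linear-algebra condition and then read off both implications from the combinatorial bookkeeping that already defined $\sdr(d,N,r)$. First I would fix the working $\delta_k$-order at the value $\rd$ that Algorithm~\ref{algo:Algo1} computes in step~(1), and use the degree-$k$ analogue of Proposition~\ref{prop:rewrule2} (announced but omitted in the text) to rewrite $p(f)=0$ as the statement that the coefficient vector $\mathbf v_p$ of $p$—expanded in the basis $\{\delta_k(y,j)\}_{0\le j\le\rd}$ with coefficient-degree tuple $t=(d_0,\ldots,d_{\rda})$—lies in the kernel of the instantiation matrix $M_t$ obtained by evaluating $p_{\text{rec}}$ at the admissible indices $n$ against the data in $S$. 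Recovery then reads: $\mathbf v_p$ belongs to the solution space returned at some tuple of the step-(2) loop.

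Next I would reduce to the case $\rda=\rd$. By Proposition~\ref{prop:maxmind} the $\delta_k$-order of an order-$r$ differential polynomial lies in $[\mdkm(r),\mdk(r)]$, whereas $\rd\in[\mdkm(r),\mdkm(r+1)]$ is, by Lemma~\ref{lem:dkcond}, the first order at which the all-degree-$d$ ansatz turns underdetermined. If $\rda>\rd$, the monomial $\delta_k(y,\rda)$ is absent from every step-(2) ansatz, so $p$ is unrepresentable and, the tuple $(d_0,\ldots,d_{\rda})$ having length $\rda+1\ne\rd+1$, both sides of the claimed equivalence fail. If $\rda<\rd$, the all-degree-$d$ system at order $\rda$ is square or overdetermined and already contains $\mathbf v_p$ in its kernel, so the operative pass is the earlier fixed-order computation at $\rda$ rather than the one governed by $\sdr(d,N,r)$. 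Hence the substantive content is $\rda=\rd$, which I assume henceforth.

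For the forward implication I would assume $t\in\sdr(d,N,r)$. The step-(2) loop then meets a tuple $t'\succeq t$ (coordinatewise) whose ansatz represents $p$—taking $t'=t$ when $t$ is a boundary tuple and otherwise invoking the domination remark preceding the statement. Since $p(f)=0$ holds identically, every instantiation of $p_{\text{rec}}$ vanishes, so $\mathbf v_p\in\ker M_{t'}$ and it also survives the verification on the remaining indices; thus the returned basis spans a space containing $p$, i.e.\ $f$ is recovered. For the converse, suppose $f$ is recovered; the returned basis comes from a tuple $t'\in\sdr(d,N,r)$ with $\mathbf v_p\in\ker M_{t'}$, and representability forces $t'\succeq t$, because the degree-$d_j$ coefficient of $\delta_k(y,j)$ in $p$ can be matched only if $d_j\le t'_j$. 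Together with the constraints cutting out $\sdr(d,N,r)$—coefficient degrees bounded by $d$ and total degree $\Dd=\rd d+\ld$ calibrated to $N-(r_{\max}+d)$—this gives $\sum_j d_j\le\sum_j t'_j\le\Dd$ and $\max_j d_j\le d$, which are the membership conditions, so $t\in\sdr(d,N,r)$.

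The hard part is the tightness of the converse: excluding that recovery is achieved solely through an oversized ansatz while $t$ itself fails the square-or-overdetermined test. This is exactly what the domination property is built to absorb—every tuple outside $\sdr(d,N,r)$ is either underdetermined ($\sum_j d_j>\Dd$) or has its solution space contained in that of a member of $\sdr(d,N,r)$—so I would spend most of the effort checking that this containment, together with the exact calibration $\Dd=\rd d+\ld$ with $\ld\in[0,d)$ against the data count $N-(r_{\max}+d)$, places the target's tuple inside $\sdr(d,N,r)$ rather than merely beneath a dominating one. I would be careful with the convention $\deg_x(0)=0$ from the algorithmic bound, so that a genuinely vanishing coefficient contributes a single unknown and does not distort the coordinatewise comparison, and with the boundary order $\rd=\mdkm(r+1)$, where the top ansatz monomial already has order $r+1$ and the matching coefficient of $p$ is forced to vanish.
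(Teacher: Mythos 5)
Your route is, in outline, the same as the paper's: the paper offers no standalone proof of this theorem (it is explicitly introduced as a ``summary'' of the section's discussion), and that discussion consists of exactly the ingredients you assemble --- the kernel reformulation of the guessing step, Lemma~\ref{lem:dkcond}, the calibration $\Dd=\rd d+\ld$ with $\ld\in[0,d)$, and the remark that every tuple excluded from $\sdr(d,N,r)$ is either underdetermined or has its solution space contained in that of a member of $\sdr(d,N,r)$. Your forward implication and your explicit reduction to $\rda=\rd$ are faithful (indeed more detailed) formalizations of what the paper leaves implicit, modulo one caveat neither you nor the paper addresses: step (2)-(d) stops at the \emph{first} tuple admitting a nontrivial verified solution, so one must still rule out that an earlier tuple terminates the loop with a spurious solution space not containing $\mathbf{v}_p$.

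The genuine gap is in your converse, and it comes from a definition you did not have in front of you. The paper defines recovery as $p\in\langle p_0,\ldots,p_\ell\rangle$, the minimal \emph{differential} ideal containing the returned basis and its derivatives --- not as membership of $\mathbf{v}_p$ in the returned solution space. Your step ``the returned basis comes from a tuple $t'$ with $\mathbf{v}_p\in\ker M_{t'}$, and representability forces $t'\succeq t$'' silently replaces ideal membership by $\KK(x)$-linear representability; ideal membership does not force $p$ to be a linear combination of $p_0,\ldots,p_\ell$, since $p$ may enter the ideal through derivatives or differential-polynomial multiples of the basis elements, whose orders and degrees exceed every ansatz in the loop, and then no coordinatewise bound $t'\succeq t$ follows. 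A second, smaller mismatch: even granting representability, you conclude only the inequalities $\max_j d_j\le d$ and $\sum_j d_j\le\Dd$, whereas the displayed definition of $\sdr(d,N,r)$ requires the equalities $\max_j d_j=d$ and $\sum_j d_j=\Dd$; your argument closes the equivalence only under the paper's looser verbal characterization of $\sdr$ (``all tuples yielding a square or overdetermined system''), and the two characterizations do not coincide. To repair the converse you must either prove that, under the theorem's hypotheses, a defining polynomial lying in the returned differential ideal but outside the returned linear span cannot occur, or state explicitly that recovery is being interpreted as linear representability --- which is arguably what the theorem intends, but is not what the paper's definition says.
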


\begin{algo}[Modular computations] We use the notation Algorithm $\!j$ $\!\!\!\mod \p$ to denote the variant of Algorithm $j$ that does the same computations as Algorithm $j$ but over the finite field of $\p$ elements. For Algorithm \ref{algo:Algo2} and \ref{algo:Algo1} this corresponds to taking $S$ modulo $\p$, evaluate the recurrence modulo $\p$, and solve the linear system modulo $\p$. These variants are crucial for doing guessing in practice. The idea is to use Algorithm \ref{algo:Algo2} $\!\!\!\mod \p_j$ for some primes $\p_1,\ldots,\p_m$, and use the Chinese remainder theorem to construct the result modulo $\prod_{j=1}^m\p_j$, which should be followed by rational reconstructions for the constant coefficients.
\end{algo}

\begin{example}[Arithmetic of functions with guessing]\label{ex:arthmDfun}
Consider the differential polynomials $p\coloneqq 2xy_1'+(x^2+1)y_1''$ and $q\coloneqq y_2+y_2''.$ We want to compute a differential polynomial $\rho$ verifying $\rho(f/g)=0$, for all generic $f,g$ such that $p(f)=q(g)=0$. Instead of computing for all $f,g$, the idea in this example is to take specific zeros of $p$ and $q$ by choosing explicit initial values. Here, we take $y_1(0)=0,y_1'(0)=1$ ($\arctan(x)$), and $y_2(0)=y_2'(0)=1$ ($\sin(x)+\cos(x)$). Thus, we can, in principle, have as many terms of the series of $f/g$ as we want.

With the first $64$ terms of the series, our implementation finds the following differential polynomial within five seconds.
\begin{dmath}
\rho\coloneqq(-3x^4 - 6x^2 - 3){z^{(2)}}^2 + (-x^4 - 8x^2 + 1){z}^2 + (-6x^4 - 24x^2 - 2){z'}^2 + (2x^4 + 4x^2 + 2){z^{(3)}}{z'} + (-4x^3 - 4x){z}{z'} + (4x^4 + 14x^2 + 2){z}{z^{(2)}} + (-6x^3 - 6x){z'}{z^{(2)}} + (2x^3 + 2x){z}{z^{(3)}}.
\end{dmath}
For the proof step, we verify that 
\begin{equation}
    \rho \in <p,q,z\,y_2-y_1>,
\end{equation}
which indeed holds. This verification can be performed within a second using differential reduction from the \texttt{DifferentialAlgebra} package \cite{DA}; see \cite{robertz2014formal} for more algorithmic details. Note that the standard Gr\"obner bases approach from \cite{RAB2024d,teguia2025arithmetic} could not terminate within half an hour. The success of the method here partly relies on the fact that ideal membership requires a lower computational cost than the full Gr\"obner basis computation required for this operation. Another explanation is that the choice of the initial values is sufficiently generic to ensure that the guessed differential polynomial vanishes at all $f/g$ when $p(f)=q(g)=0$.
\end{example}

In general, compared to the classical method, Algorithm \ref{algo:Algo2} requires fewer initial terms when enabled to try all polynomial coefficients of a given degree; however, this comes at a certain cost. All differential equations of D-algebraic functions from \cite{hebisch2011extended} are obtained with fewer data. The next example witnesses this observation.

\begin{example}[Guessing a function with necessary data]\label{ex:littleDfun}
Let $s_n\coloneqq \sum_{k=0}^n C_{3k}$, where $C_k$ is the $k$th Catalan number. With at least the first $22$ terms of $(s_n)_{n\in\NN}$, the current version (4.14, October 2025) of \texttt{Gfun} returns the following correct third-order linear differential equation with polynomial coefficients of degree at most $4$ for its generating function.
\begin{dmath}
(-48 + 1848x){y} + (6992x^2 - 1870x + 8){y'} + (4320x^3 - 2646x^2 + 27x){y''} + (576x^4 - 585x^3 + 9x^2){y^{(3)}} = 0.
\end{dmath}
Our algorithm finds the same differential equation with $15$ terms. The latter is also the minimal required by \texttt{Gfun} to find the corresponding D-finite recurrence equation. Note that the degrees of the polynomial coefficients are $1,2,3,4$, which yields the same algorithmic and optimal bound of $14$. 

The method from \cite{kauers2022guessing} can find the correct recurrence equation with only $8$ terms. Thus, while we can generally find ADEs with fewer terms than what the classical linear algebra method demands, in comparison with the amount of data required in \cite{kauers2009guessing}, we categorize Algorithm \ref{algo:Algo2} as succeeding with the ``necessary data.''
\end{example}

\section{Difference algebraic sequences}\label{sec:dalgseq}

For convenience of notation, we now use $u\coloneqq (u_n)_{n\in\NN}$ for the unknown infinite sequence, and use $s_n$ as the difference indeterminate. We want to recover $u$ from its consecutive terms $u_{n_0},u_{n_0+1},\ldots,u_N$. We seek a difference polynomial $p\in R_s$ such that $p(u)=0$. As in the previous section, we reduce the problem to finding a difference polynomial of a certain degree. Since all coefficients are constants, it is enough to design the analogue of Algorithm \ref{algo:Algo2} without Algorithm \ref{algo:Algo1}.

\subsection{Bounded-degree ADEs} 

Let $k\in\NN\setminus \{0\}$ and $U\coloneqq\{u_{n_0},u_{n_0+1},\ldots,u_N\},$  $U\subset\KK.$ Again, we take $n_0=0$. If $n_0>0$, the algorithm would seek $p\in R_s$ such that $p(v)=0$, where $v=(u_n)_{n\geq n_0}$, which is a sequence of the same {\it germ} as $u$ (they differ by finitely many terms).

Let 

\begin{align*}
\mathcal{S}_k &\coloneqq \left\{\prod_{j=j_1}^{j_k} \sigma^j(s_n),\, (j_1,\ldots,j_k)\in(\NN\cup \{-1\})^k\right\}\\
              &= \left\{1,s_n,s_n^2,\ldots,s_n^k,s_{n+1},s_{n+1}s_n,\ldots,s_{n+1}s_n^{k-1},{s_{n+1}}^2s_n,\ldots\right\},
\end{align*}
be the analogue of $Y_k$ from the differential case. In a similar manner, we define the ordering $\succ_k$ on $\mathcal{S}_k$.

\begin{definition} For any nonnegative integer $n$, we define the $n$th $\theta_k$-shift of $s_n$ as the $(n+1)$st minimal element of $\mathcal{S}_k\setminus \{1\}$ w.r.t. $\succ_k.$
\end{definition}

\begin{prop} Let $r\in\NN$, and $\mtk(r)$ and $\mtkm(r)$ be the maximum and the minimum integers $j$ such that $\ord(\theta_k(s_n,j))=r.$ We have $\mtk(r)=\mdk(r)$ and $\mtkm(r)=\mdkm(r)$, where $\mdk(r)$ and $\mdkm(r)$ are as in Proposition \ref{prop:maxmind}.
\end{prop}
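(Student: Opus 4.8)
The plan is to show that the combinatorial structure of the sets $\mathcal{S}_k$ and $Y_k$, together with their respective orderings $\succ_k$, are \emph{identical} under the obvious bijection, so that the counting functions $\mtk,\mtkm$ coincide with $\mdk,\mdkm$ by inspection rather than by re-deriving the binomial formulas. First I would make explicit the bijection $\Phi\colon Y_k\to\mathcal{S}_k$ that sends $y^{(j)}\mapsto\sigma^j(s_n)=s_{n+j}$ (with $y^{(-1)}=1\mapsto 1$) and extends multiplicatively to products of $k$ factors; concretely $\Phi\bigl(\prod_{i}y^{(j_i)}\bigr)=\prod_i s_{n+j_i}$. Since both $Y_k$ and $\mathcal{S}_k$ are indexed by the same data, namely multisets of size $k$ drawn from $\NN\cup\{-1\}$, the map $\Phi$ is a bijection.

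The key step is to verify that $\Phi$ is \emph{order-preserving}, i.e.\ $t_1\succ_k t_2$ in $Y_k$ if and only if $\Phi(t_1)\succ_k\Phi(t_2)$ in $\mathcal{S}_k$. This follows because the ordering $\succ_k$ is defined in both settings by exactly the same two-stage rule: first compare total order $\ord$, then break ties by comparing $\ord$ of the cofactors after removing the gcd. One only needs to observe that $\ord(y^{(j)})=j$ matches $\ord(\sigma^j(s_n))=j$, that $\ord$ of a product is the maximum of the orders of its factors in both rings, and that the gcd operation commutes with $\Phi$ since $\Phi$ is a multiplicative bijection on monomials. Hence the order statistics are literally transported, and the $(j+1)$st minimal element of $Y_k\setminus\{1\}$ maps under $\Phi$ to the $(j+1)$st minimal element of $\mathcal{S}_k\setminus\{1\}$; that is, $\Phi(\delta_k(y,j))=\theta_k(s_n,j)$.

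Once the order-isomorphism is established, the conclusion is immediate: since $\Phi$ preserves order and preserves $\ord$, the set $\{j:\ord(\theta_k(s_n,j))=r\}$ equals $\{j:\ord(\delta_k(y,j))=r\}$ as subsets of $\NN$, so their maxima and minima agree. Therefore $\mtk(r)=\mdk(r)$ and $\mtkm(r)=\mdkm(r)$, and the explicit values are inherited verbatim from Proposition \ref{prop:maxmind}. I expect the only genuine subtlety, hence the main obstacle, to be the tie-breaking clause in the definition of $\succ_k$: one must confirm that $\gcd$ of two difference monomials corresponds under $\Phi$ to the $\gcd$ of the two differential monomials, which amounts to checking that the multiset intersection governing the gcd is the same combinatorial operation on both sides. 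This is routine but is the place where the formal identification of the two orderings must be pinned down carefully rather than asserted.
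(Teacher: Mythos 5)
Your proposal is correct and is essentially the paper's own (implicit) reasoning: the paper states this proposition without proof, taking for granted that $\mathcal{S}_k$ with $\succ_k$ is the exact analogue of $Y_k$ with $\succ_k$, which is precisely the order-isomorphism $y^{(j)}\mapsto s_{n+j}$ you spell out. Making the bijection and its compatibility with $\ord$ and $\gcd$ explicit is a faithful formalization of what the paper treats as immediate, so there is nothing to object to.
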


We can evaluate an ansatz of $p$ of order $r$ for $n=0,\ldots,N-r$, thus obtaining $N-r+1$ equations with at least $\mtkm(r)+1$ unknowns.

\begin{lemma}\label{lem:tkcond} Let $p$ be an ansatz of order $r$. A square or overdetermined linear system arises from instantiating $p=0$ with the terms in $U$ if the following condition is met.
\begin{equation}\label{eq:tkcond}
    \binom{r+k}{k}+r\leq N+1.
\end{equation}
\end{lemma}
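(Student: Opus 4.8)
The plan is to compare the number of linear equations produced by instantiation against the number of unknown coefficients in the ansatz, and to show that condition \eqref{eq:tkcond} forces the former to be at least the latter — which is exactly the defining inequality for a square or overdetermined system. Accordingly, I would \emph{start} from the hypothesis \eqref{eq:tkcond} and derive $E \geq V$, where $E$ denotes the number of equations and $V$ the number of unknowns; the implication must be kept pointed in this direction, rather than assuming a square or overdetermined system and recovering the inequality.

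First I would count the equations. Since $p$ has order $r$, the relation $p(u)=0$ is already a recurrence linking the consecutive terms $u_n, u_{n+1}, \ldots, u_{n+r}$; in contrast to the functional setting (cf.\ Corollary \ref{prop:boundre}), no conversion step inflates the order here. Hence the largest index reached when instantiating at $n = N-r$ is $u_N$, the last available datum, and the admissible range $n = 0, 1, \ldots, N-r$ yields exactly $E = N - r + 1$ equations, as recorded in the discussion preceding the statement.

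Next I would count the unknowns of the smallest ansatz of order $r$. Reaching order $r$ requires including at least one $\theta_k$-shift of order $r$, and the cheapest such ansatz consists of the shifts $\theta_k(s_n,0), \ldots, \theta_k(s_n, \mtkm(r))$. None of these equals the constant $1$, since $\theta_k$ ranges over $\mathcal{S}_k \setminus \{1\}$; consequently there are exactly $V = \mtkm(r)+1$ unknown constant coefficients. Invoking $\mtkm(r) = \mdkm(r) = 1 + \mdk(r-1)$ together with $\mdk(r-1) = \binom{r+k}{k} - 2$ from Proposition \ref{prop:maxmind} gives $\mtkm(r) = \binom{r+k}{k} - 1$, and one checks this also holds at the base case $r=0$, where $\mtkm(0) = 0 = \binom{k}{k} - 1$. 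Therefore $V = \binom{r+k}{k}$.

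Finally, I would close the argument: assuming \eqref{eq:tkcond}, a rearrangement gives $\binom{r+k}{k} \leq N + 1 - r = N - r + 1 = E$, that is, $V \leq E$. A linear system with at least as many equations as unknowns is square (when $E = V$) or overdetermined (when $E > V$), which is the claim. The main delicacy — the point at which the proof must be handled with care — is the unknown count: the exact threshold in \eqref{eq:tkcond} differs from the neighbouring inequality $\binom{r+k}{k} + r \leq N$ by precisely one, and getting it right hinges on the exclusion of the constant monomial $1$ from the ansatz and on the correct evaluation of $\mtkm(r)$, including its $r=0$ base case.
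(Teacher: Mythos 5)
Your proof is correct and takes essentially the same route as the paper: the lemma there is justified precisely by the counting you reproduce, namely $E=N-r+1$ equations from instantiating at $n=0,\ldots,N-r$ against the minimal order-$r$ ansatz with $\mtkm(r)+1=\binom{r+k}{k}$ unknowns (via $\mtkm(r)=\mdkm(r)=1+\mdk(r-1)$ from Proposition \ref{prop:maxmind}), with \eqref{eq:tkcond} being exactly the rearrangement of $V\leq E$. Your attention to the $r=0$ base case, the exclusion of the constant monomial $1$, and the direction of the implication (sufficiency for the minimal ansatz, in contrast with the necessity statement of Lemma \ref{lem:dkcond} in the differential case) matches the paper's intent.
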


We can relate the condition of Lemma \ref{lem:tkcond} to the one from \cite[Section 1.5]{kauers2023d} and \cite{kauers2022guessing} concerning D-finite sequences. It reads $(r+1)(d+2)\leq N+2$ where $d$ is the maximum degree of the polynomial coefficients. Assuming $d=0$, i.e., linear recurrence with constant coefficients, the inequality becomes $2r\leq N$. Translating \eqref{eq:tkcond} for such equations is the same as taking $k=1$, which yields exactly the same inequality.

\begin{algorithm}[ht]\caption{Guessing D-algebraic sequences.}\label{algo:Algo3}
    \begin{algorithmic} 
    \\ \Require $U\coloneqq \{u_0,\ldots,u_N\}\subset\KK, k\in\NN.$
    \Ensure A basis of difference polynomials $p$ of degree at most $k$ such that $p\left(s_n\right) = 0,$ for all $n\leq N-\ord(p);$ or ``None''.
    \begin{enumerate}
    \item Let $r_{\max}$ be the maximum $r$ such that \eqref{eq:tkcond} holds.
    \item For $\rta=\mtk(0),\mtk(0)+1,\ldots,\mtkm(r_{\max})$ do  
    \begin{enumerate}
        \item Let $p=\sum_{j=0}^{\rta}C_j\theta_k(s_n,j)$, where $C_j\in\KK$ is of degree $d$ with undetermined coefficients, and set $r=\ord(p)$.
        \item Evaluate $p=0$ for $n=0,\ldots,(d+1)(\rta+1)$ with the initial values in $U$. This produces an overdetermined linear system in the unknowns $C$'s.
        \item If the system has a nontrivial solution, verify that $p$ evaluates to zero for $n=(d+1)(\rta.+1)+1,\ldots, N-(r_{\max}+d)$. If it does, then stop and return $p$ with the unknown constant coefficients replaced by their corresponding value in the general vector solution.
    \end{enumerate}
    \item Return ``None''.
    \end{enumerate}
    \end{algorithmic}
\end{algorithm}
\noindent We summarize our results in the following theorem.

\begin{theorem} Algorithms \ref{algo:Algo3} is correct. Specifically, let $u\coloneqq (u_n)_{n\in\NN}$ be a D-algebraic sequence defined by a difference polynomial $p$ of degree $k$ with $\theta_k$-order $\rta$, and order $r$. Let $\tilde{u}=(u_0,\ldots,u_N)$ be a truncation of $u$. Algorithm \ref{algo:Algo3} recovers $u$ from $\tilde{u}$ if and only if $\rta + r\leq N$. In particular, if Algorithm \ref{algo:Algo3} recovers $u$ from $\tilde{u}$ then $\binom{r+k}{k}+r\leq N+1.$ 
\end{theorem}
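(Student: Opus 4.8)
The plan is to prove the theorem by combining the counting facts already established for the $\theta_k$-shift ordering with a careful analysis of when the linear system in Algorithm \ref{algo:Algo3} admits the correct solution. The ``only if'' direction and the ``if'' direction of the biconditional $\rta+r\leq N$ must be treated separately, and the final implication (the binomial inequality) follows immediately from Lemma \ref{lem:tkcond} once the biconditional is in hand. Throughout, recall from Proposition \ref{prop:maxmind} that $\mtk(r)=\mdk(r)=\binom{k+r+1}{k}-2$ and $\mtkm(r)=\mdkm(r)$, so that a difference polynomial of $\theta_k$-order $\rta$ and order $r$ satisfies $\mtkm(r)\leq \rta\leq \mtk(r)$, i.e. it involves exactly $\rta+1$ monomials from $\mathcal{S}_k\setminus\{1\}$ all of order at most $r$.

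First I would prove the ``only if'' direction: if Algorithm \ref{algo:Algo3} recovers $u$, then $\rta+r\leq N$. Since all coefficients $C_j$ are constants (degree $d=0$), the recurrence $p=0$ has order $r$, so the equation $p(u_n)=0$ can be instantiated only for indices $n$ with $n+r\leq N$, giving at most $N-r+1$ usable equations. By step (2)-(c), recovery requires the system formed at $\theta_k$-order $\rta$ to be verified on enough instances; the ansatz carries $\rta+1$ unknown coefficients, and to pin down the one-dimensional solution space spanned by the target $p$ (up to scalar) the algorithm must evaluate the recurrence at indices running up to $(d+1)(\rta+1)=\rta+1$ in step (2)-(b), and these must be valid instances. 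The largest index appearing is then $(\rta+1)+r-1=\rta+r$, which must not exceed $N$. Hence $\rta+r\leq N$. I would phrase this cleanly using Lemma \ref{lem:tkcond} with $d=0$: the square-or-overdetermined condition is $\binom{r+k}{k}+r\leq N+1$, and since $\binom{r+k}{k}-1=\mtkm(r)\leq \rta$, availability of the needed data forces $\rta+r\leq N$.

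Next I would prove the ``if'' direction: if $\rta+r\leq N$, Algorithm \ref{algo:Algo3} recovers $u$. Here the key point is that when $\rta+r\leq N$, every instance $p(u_{n})=0$ for $0\leq n\leq N-r$ is a genuine equation satisfied by the true sequence, and there are at least $\rta+1$ such instances (since $N-r+1\geq \rta+1$). Because $u$ is genuinely defined by $p$, the target coefficient vector lies in the solution space of this instantiated homogeneous system. The remaining task is to show the algorithm actually reaches $\theta_k$-order $\rta$ and returns a $p$ whose vanishing certifies $u$; this follows because the iteration in step (2) ascends through all $\theta_k$-orders up to $\mtkm(r_{\max})\geq \rta$ (using the monotonicity of the bound in Lemma \ref{lem:tkcond}), and at order $\rta$ the verification in step (2)-(c) succeeds on the true sequence by construction. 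The final sentence of the theorem is then immediate: recovery gives $\rta+r\leq N$, and combining with $\mtkm(r)=\binom{r+k}{k}-1\leq\rta$ yields $\binom{r+k}{k}+r\leq N+1$, which is precisely \eqref{eq:tkcond}.

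The main obstacle I anticipate is making the ``if'' direction fully rigorous rather than merely plausible, because one must rule out the algorithm stopping prematurely at a \emph{lower} $\theta_k$-order with a \emph{spurious} difference polynomial that happens to vanish on the truncation but not on all of $u$. The clean resolution is to invoke Definition \ref{def:dalgseq}: the defining polynomial $p$ is minimal in the sense that the recurrence it induces computes every term from a fixed index, and any lower-order nontrivial relation vanishing on infinitely many terms would contradict either the irreducibility/initial conditions of Definition \ref{def:dalgseq} or the minimality of $\rta$ as the $\theta_k$-order of $p$. I would therefore state the recovery as being \emph{up to the solution space}, matching the ``basis of difference polynomials'' language in the output specification of Algorithm \ref{algo:Algo3}, so that returning the full null space (which contains the true $p$) counts as recovery. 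With that convention, early termination at a lower order is impossible unless $u$ actually satisfies a lower-order defining polynomial, contradicting the hypothesis that $\rta$ is the $\theta_k$-order of its defining polynomial.
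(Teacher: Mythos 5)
The paper never writes out a proof of this theorem: it is stated as a summary of the counting discussion, so the implicit argument is exactly your bookkeeping --- $N-r+1$ available instantiations versus $\rta+1$ unknown constants, together with $\mtkm(r)=\binom{r+k}{k}-1\leq\rta$ to pass from $\rta+r\leq N$ to \eqref{eq:tkcond}. In that spirit your ``only if'' direction and the final binomial implication are fine, modulo one index slip: instantiating $p(u_n)=0$ at $n=\rta+1$ consumes $u_{\rta+1+r}$, so the largest index used in step (2)-(b) is $(\rta+1)+r$, not $(\rta+1)+r-1$; your subsequent rephrasing via Lemma \ref{lem:tkcond} ($N-r+1\geq\rta+1$ equations available) is the correct way to state it and is what the paper intends.

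The genuine gaps are both in your ``if'' direction. First, your claim that the loop reaches $\theta_k$-order $\rta$ because $\mtkm(r_{\max})\geq\rta$ ``by monotonicity'' is false in general: the hypothesis $\rta+r\leq N$ bounds $\rta$ by $N-r$, not by $\mtkm(r_{\max})$, and when $r=r_{\max}$ one can have $\rta\in(\mtkm(r_{\max}),\,\mtk(r_{\max})]$. Concretely, take $k=2$ and a defining polynomial of order $r=1$ whose leading monomial is $s_{n+1}^2$ (so $\rta=4$), with $N=5$: then $\rta+r=5\leq N$ and the corresponding linear system is square, but \eqref{eq:tkcond} gives $r_{\max}=1$ and $\mtkm(1)=2<4$, so the loop in Algorithm \ref{algo:Algo3} as written never builds an ansatz containing $s_{n+1}^2$. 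This is precisely the regime that Algorithm \ref{algo:Algo1} is designed to catch in the differential case; the sequence section asserts that no such companion is needed, and your proof silently inherits that unproven assertion instead of closing it. Second, your resolution of the early-termination problem does not work: step (2)-(c) verifies vanishing only on the truncation $u_0,\ldots,u_N$, so a spurious lower-order polynomial that annihilates these finitely many instantiations --- but not the infinite sequence --- passes verification and stops the loop. Definition \ref{def:dalgseq} and the minimality of $\rta$ are conditions on $u$ itself, so they produce no contradiction with anything the algorithm can observe from finite data. Excluding this scenario requires either a genericity hypothesis on the input terms or a weakening of ``recovers''; since the paper supplies no proof it never confronts either point, but your write-up makes both claims explicitly, and both fail as stated.
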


\begin{algo}[Modular computations] As in the differential case we denote Algorithm \ref{algo:Algo3} $\!\!\!\!\!\mod \p$ the variant of Algorithm \ref{algo:Algo3} that does the computation over the field of $\p$ elements, where $\p$ is a prime.
\end{algo}

\begin{example}[Arithmetic of sequences with guessing]\label{ex:arthmDseq} We aim to construct the difference polynomial in Proposition \ref{prop:catoFib} from the difference polynomials defining $(C_n)_{n\in\NN}$ and $(F_n)_{n\in\NN}.$ The method we present can be regarded as a variant of that discussed in Example \ref{ex:arthmDfun}. The main steps are as follows. We assume that we have two input difference polynomials of orders $r_1$ and $r_2$, with difference indeterminates $a_n$ and $b_n$, respectively.
\begin{enumerate}
    \item Find prime numbers $\p_1,\ldots,\p_m$ such that for any $j=0,\ldots,m$, Algorithm \ref{algo:Algo3}\! $\!\mod \p_j$ yields difference polynomials with the same difference monomials. In other words, the nonzero coefficients (their numerators) in the hypothetical difference polynomial are not divisible by these primes, and their product divides the remaining coefficients, which we claim to be zeros.
    \item Make an ansatz with undetermined coefficients whose monomials are the ones from step (1).
    \item Substitute the indeterminate sequence, say $s_n$, in that ansatz with the operation $\mathcal{O}(a_n,b_n),$ where $a_n$ and $b_n$ symbolically represent the D-algebraic zeros of the input difference polynomials. For this example, we have $\mathcal{O}(C_n,F_n)=C_n/F_n.$
    \item Use substitutions with the input difference polynomials to rewrite all occurrences of $a_{n+d_1}, d_1\geq r_1$ and $b_{n+d_2}, d_2\geq r_2$ with lower order terms, and take the numerator of the resulting expression. We look at this numerator as a multivariate polynomial in $a_n,\ldots,a_{n+r_1-1},$ $b_n,\ldots,b_{n+r_2-1}$.
    \item Solve the linear system obtained by equating the coefficients of this polynomial to zero.
\end{enumerate}
With $175$ terms and the primes $751$ and $5003$, we find the same set of difference monomials. The ansatz yields an overdetermined linear system of $41$ equations with $21$ unknowns. The solutions are constant multiples of the vector of coefficients in Proposition \ref{prop:catoFib}, concluding the proof. 
\end{example}

\begin{example}[Fibonacci at index $2^n$]\label{ex:littleDseq}
Using the first $15$ terms of $(F_{2^n})_{n\in\NN}$, where $F_n$ is the $n$th Fibonacci number, we find the following difference polynomials modulo $\p_1\coloneqq 101$ and $\p_2\coloneqq 103$:
\[\begin{split}
    &96s_n^5 + 5s_n^4 + 97s_n^3 + s_{n+1}^2s_n + 4s_n^2 + 100s_{n+1}^2\quad (\!\!\!\!\!\mod \p_1),\\
    &98s_n^5 + 5s_n^4 + 99s_n^3 + s_{n+1}^2s_n + 4s_n^2 + 102s_{n+1}^2\quad (\!\!\!\!\!\mod \p_2).
\end{split}\]
After reconstruction via Chinese remaindering modulo $\mu\coloneqq \p_1\p_2=10403$, we get
\begin{equation}\label{eq:fib2ton}
    -(s_n - 1)(5s_n^4 + 4s_n^2 - s_{n+1}^2),
\end{equation}
whose one of the factors yields the classical identity $F_{2^{n+1}}^2 = F_{2^n}^2\left(5F_{2^n}^2+4\right)$. The other factor implies that this identity is not valid for $n=0$. Note that although we obtained the correct result after applying the Chinese remainder theorem, it is always required to apply rational reconstructions modulo $\mu$, which here keeps \eqref{eq:fib2ton} unchanged.
\end{example}

\section{Software and applications}\label{sec:dalgapl}

The following link leads to the repository of the \texttt{NLDE} package in which all the algorithms of this paper have been implemented by the author.

\begin{center}
    \begin{small}
        \url{https://github.com/T3gu1a/D-algebraic-functions}
    \end{small}
\end{center}

We describe the corresponding commands and illustrate their use on some examples.

\subsection{Implementation} We implemented Algorithm \ref{algo:Algo2} and Algorithm \ref{algo:Algo3} and their variants for finite fields in our Maple \texttt{NLDE} package.
\begin{enumerate}
    \item For guessing D-algebraic functions over fields of characteristic zero, the command is as follows.
    \begin{small}
    \[
    \begin{split}
    &\texttt{> NLDE:-DalgFunGuess(L, degADE=2, degPoly=2,}\\
    &\texttt{startfromord=0, allPolyDeg=false);}
    \end{split}
    \]    
    \end{small}
    $\texttt{L}$ is the list of initial terms; $\texttt{degADE},$ with default value $2$, is the degree of the differential polynomial sought; $\texttt{degPoly}$, default value $2$, is the maximum degree of the polynomial coefficients. Two other interesting arguments are $\texttt{startfromord}$, with default value $0$, representing the starting order for the search. This is the $r_{\min}$ discussed after Algorithm \ref{algo:Algo2}; \texttt{allPolyDeg} is a boolean variable, which, when assigned $\texttt{true}$, enables Algorithm \ref{algo:Algo1} to be executed to fully exploit the data to search for a differential polynomial with polynomial coefficients of degree at most \texttt{degPoly} where Algorithm \ref{algo:Algo2} cannot.
    \item The variant of $\texttt{NLDE:-DalgFunGuess}$ for computations over finite fields. 
    \begin{small}
    \[
    \begin{split}
    &\texttt{> NLDE:-modDalgFunGuess(L, degADE=2, degPoly=2,}\\
    &\texttt{startfromord=0,allPolyDeg=false, modulus=7);}
    \end{split}
    \]        
    \end{small}
    The argument $\texttt{modulus}$ is a prime number for the characteristic of the field.
    \item For guessing D-algebraic sequences, one first needs to load the subpackage for sequences in Maple.
    \begin{small}
    \[\texttt{>  with(NLDE,DalgSeq):}\]    
    \end{small}
    Once loaded, the commands are accessible with the \texttt{DalgSeq} prefix. For computations over fields of characteristic zero, we have
    \begin{small}
    \[\texttt{> DalgSeq:-DalgGuess(L, degADE=2, startfromord=0);}\]
    \end{small}
    where the arguments are the analogues of the differential case for difference equations.
    \item For guessing recurrences over finite fields, the command is
    \begin{small}
    \[
    \begin{split}
    &\texttt{> DalgSeq:-modDalgGuess(L, degADE=2,}\\
    &\texttt{startfromord=0, modulus=7);}
    \end{split}
    \]        
    \end{small}
\end{enumerate}
\begin{remark}[Challenges] While experimenting with our implementation of Algorithm \ref{algo:Algo2}, we noticed that evaluating the recurrence of the ansatz to obtain linear equations becomes more time-consuming when the degree of the ansatz grows. For instance, let us consider the recurrence of the ansatz of $\delta_4$-order $\mathrm{m}_{\delta_4}=14$ (degree $4$) with polynomial coefficients of degree $8$. So there are $9\times 15=135$ unknowns. Let us find the coefficients of one of the unknowns by evaluating the following expression for $n=50,\ldots,100.$ 
\[\sum_{k_2=0}^{n-2}\sum_{k_1=0}^{k_2}\sum_{k_0=0}^{k_1}(k_0+1)(k_2-k_1+1)(n-1-k_2)s_{k_0+1}s_{k_1-k_0+1}s_{k_2-k_1+1}s_{n-1-k_2}.\]
In Maple, this may be effectively done as follows.
\begin{small}
\[\texttt{> seq(eval(T,[n=j,Sum=add]),j=50..100):}\]
\end{small}
Here $\texttt{T}$ represents the expression we want to evaluate. Note that this is before substituting initial terms, which can easily be combined with the command above using the $\texttt{subs}$ command. This evaluation alone takes about $9$ seconds in CPU Time on our working computer. Thus, taking into account all other terms, it may take about $15$ minutes to instantiate the full recurrence and get the linear system, which can be solved very effectively. A better implementation of this instantiation is essential to exploit the full capacity of our algorithm. Our current implementation uses the above Maple code, but we expect improvement in future versions.
\end{remark}

\begin{remark}[D-algebraic guessing as "Fishing"]
We can view D-algebraic guessing as a form of intellectual fishing. The domain of our search -- the ``fishing grounds''-- is the immense sea of sequences arising in other sciences such as combinatorics, computer science, and physics. The ``fish'' we seek are the underlying differential and difference equations that govern these sequences. Our pursuit is a form of knowledge translated into algorithms designed to methodically search for these candidate equations. The use of our software may also require a methodical approach. The user must intelligently explore various degrees for polynomial coefficients and equations and critically interpret the recurring algebraic structures revealed through computations over finite fields from the same input data. This philosophy underlies the design of our software's commands.
\end{remark}

\subsection{OEIS A189281} The $n$th term of the integer sequence \href{https://oeis.org/A189281}{A189281} from the online encyclopedia of integer sequences \cite{sloane2003line}, counts the number of permutations $\pi$ of the symmetric group $S_n$ such that $\pi(i+2)-\pi(i)\neq 2$ for $i=1,\ldots,n-2$. A conjecture from \cite{kauers2022guessing} states that this sequence satisfies a linear recurrence equation of order $8$ and degree $11$. We have been able to find the same recurrence equation from a guessed differential equation for the corresponding generating function. 

There are currently $301$ terms of the sequence on the OEIS. We did not apply our algorithm with fewer initial terms; instead, we downloaded all the data and saved it in the file \texttt{A189281.txt} in the same directory as the Maple worksheet. The file contains two columns of integers: the first column on the left is for indices, and the other is for their corresponding terms. We load these terms into a list in Maple via the following command.
\begin{small}
\[
\begin{split}
&\texttt{> L:= map(indexterms -> indexterms[2],}\\
& \texttt{readata("A189281.txt",2,integer):}
\end{split}
\]
\end{small}
Next, we use Algorithm \ref{algo:Algo2} $\!\!\!\!\mod \p$ for $\p \in\{503, 563, 571, 577, 587\}$ to search for a linear differential equation, i.e., $k=1$ in the algorithm, and \texttt{degADE=1} for the software, with polynomial coefficients of degree at most $11$ ($d=11$, \texttt{degPoly=11}). The code for the computation is as follows.
\begin{small}
\[
\begin{split}
&\texttt{> NLDE:-modDalgFunGuess(L, degADE=1,}\\
&\texttt{degPoly=11, startfromord=19, modulus=}\p\texttt{);}
\end{split}
\]
\end{small}
Each of these computations takes about $3$ to $5$ seconds. After noticing that a probable guess of order $19$ was found with one of the primes, we use the value $19$ as the starting order for the others. We did try other prime values and polynomial degrees. For instance, we also observed that the sequence may satisfy a differential equation of order $4$ with polynomial coefficients of at most $24$. However, as the space of polynomial coefficients obtained was two-dimensional, we did not consider it further.

Since the terms of the sequence are quite long integers (the term of index $50$ has $65$ digits), it is not expected that the chosen primes would lead to the correct equation over the integers via the Chinese remainder theorem. However, we noticed that for each of these primes, the returned differential equation has the same set of appearing differential monomials (including $x$). In total, there are $155$ terms in the expanded differential equation. Other primes, such as $569$, were leading to $154$ terms.  The good news here is that there are more data than monomials, and we can define an ansatz with unknown coefficients in front of these differential monomials to construct an overdetermined linear system. We only used $160$ terms of the sequence for it. We obtained a linear differential equation of order $19$ with polynomial coefficients of degree at most $11$. A more precise characterization of the guess is provided in the conjecture below. The recurrence equation of this differential equation is of order $16$. Executing van Hoeij's implementation \texttt{MinimalRecurrence} from \texttt{LREtools} on this recurrence yields exactly the conjectured recurrence from Manuel Kauers and Christoph Koutschan \cite[Conjecture 10]{kauers2022guessing}, which is printed on the OEIS. We mention that they obtained that equation with only $36$ initial terms.

We were unable to factorize the differential equation in Maple; however, the particular pattern of the degrees of its polynomial coefficients is easy to recall. The minimal order is $2$, and for $j=2,\ldots,11$ the polynomial coefficient in from of $y^{(j)}$ is of degree $j$. The remaining coefficients are of degree $11$.

\begin{conjecture} The generating function of the sequence \href{https://oeis.org/A189281}{A189281} satisfies a linear differential equation of order $19$ with polynomial coefficients of degrees 
\[11,11,11,11,11,11,11,11,11,10,9,8,7,6,5,4,3,2,-\infty,-\infty,\]
ordered in decreasing order from the coefficients of $y^{(19)}$ to that of $y^{(0)}=~y$. With the leading and the trailing terms made explicit, the differential equation has the form
\[
\begin{split}
&-64x^8(x - 1)(x + 1)^2\,y^{(19)}+\cdots\\
& \cdots+ \left(906154905600 x^{2}-3621295278720 x-504819584640\right) y^{(2)}=0.
\end{split}
\]
\end{conjecture}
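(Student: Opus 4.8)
The plan is to upgrade the conjecture to a theorem in two stages: first establish, by a combinatorial argument, that the generating function $f$ of \href{https://oeis.org/A189281}{A189281} is D-finite together with explicit bounds on the order and polynomial degree of \emph{some} annihilating operator; and then invoke the standard ``guessing becomes proving'' completion, which renders the guessed operator $L$ (the order-$19$, degree-$11$ operator displayed above) verifiable by a finite, certified computation. It is worth noting at the outset that, because the coefficient-to-recurrence dictionary used throughout this paper is an exact and certified algebraic equivalence for D-finite objects, proving this conjecture is, modulo such conversions, interchangeable with proving the order-$8$, degree-$11$ recurrence of \cite[Conjecture 10]{kauers2022guessing}: van Hoeij's \texttt{MinimalRecurrence} already certifies the passage between the recurrence attached to our ODE and theirs, so either statement yields the other.

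For the D-finiteness stage I would run inclusion--exclusion over the forbidden events $A_i\coloneqq\{\pi\in S_n : \pi(i+2)-\pi(i)=2\}$, $i=1,\dots,n-2$, giving $a_n=\sum_{T\subseteq[n-2]}(-1)^{|T|}N(T)$, where $N(T)$ counts permutations satisfying $\pi(i+2)=\pi(i)+2$ for all $i\in T$. The constraints chain \emph{within each parity class} of positions, so a maximal run of forced equalities pins a block of positions in arithmetic progression (common difference $2$) to a block of values in arithmetic progression (common difference $2$). Summing $N(T)$ over the composition structure of $T$ should then express $a_n$ as a single nested sum of products of binomial and factorial terms, hence a holonomic description by the closure of D-finite sequences under such sums. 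Creative telescoping applied to this sum yields a genuine annihilating recurrence for $a_n$, and thus an annihilating differential operator $\tilde L$ for $f$, together with effective bounds on its order $\tilde r$ and degree $\tilde d$.

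Once such an explicit $\tilde L$ is in hand, the completion is routine and rigorous. Since $f$ is annihilated by $\tilde L$, the series $Lf$ is again D-finite, and an annihilating operator for $Lf$ — together with a bound $M$ on how far a nonzero D-finite series with that annihilator can vanish at the origin — is computable from $\tilde L$ and $L$ alone. It therefore suffices to check that $Lf$ vanishes to order $M$, i.e.\ that $L$ annihilates the first $M$ Taylor coefficients of $f$; equivalently, one reduces $L$ on the right modulo $\tilde L$ in the Ore algebra $\QQ(x)\langle\partial_x\rangle$ and verifies that the remainder annihilates $f$, a finite linear-algebra certificate. Combined with the sequence data already on the OEIS, this converts the guess into a proof.

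The hard part will be the first stage. Computing $N(T)$ honestly is delicate, because the two forced arithmetic progressions (of positions and of values) must fit inside $[n]$ without collision, so $N(T)$ is not a bare factorial but a constrained placement count of ménage-type difficulty; assembling the inclusion--exclusion into a single certified holonomic sum, and then bounding the order and degree of its minimal annihilator, is where the real work lies. Moreover, the appearance of order $19$ in the guess strongly suggests that the operator produced by creative telescoping will have order well above $19$, so the guessed $L$ is only a \emph{right factor} of the provable $\tilde L$; isolating this particular factor as the one annihilating $f$ (rather than merely dividing $\tilde L$) is the genuine obstacle, and it is precisely the content that the finite verification above is designed to settle once $\tilde L$ and its bounds are secured.
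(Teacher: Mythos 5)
This statement is a \emph{conjecture} in the paper: the author deliberately does not prove it, and supports it only with computational evidence (modular guessing for several primes, an ansatz over the common monomial support, and the observation that the recurrence attached to the guessed ODE reduces, via van Hoeij's \texttt{MinimalRecurrence}, to the recurrence conjectured by Kauers and Koutschan). Your proposal must therefore stand on its own as a proof, and it does not: everything rests on your Stage 1, which you leave entirely unexecuted. Establishing that A189281 is D-finite, with an explicit annihilating operator and effective order/degree bounds, is precisely the open problem that makes both this statement and \cite[Conjecture 10]{kauers2022guessing} conjectures rather than theorems. The inclusion--exclusion you sketch is exactly where the difficulty sits: for step-$1$ successions ($\pi(i+1)=\pi(i)+1$) one has the clean identity $N(T)=(n-|T|)!$ by gluing, but for the step-$2$ constraint the forced value progression $v, v+2,\dots$ must be packed inside $[n]$, so the count is genuinely constrained (already for $n=3$, $T=\{1\}$ one gets $N(T)=1\neq 2!$), and the blocks of $T$ can be unboundedly many, so there is no fixed-depth nested sum to feed to creative telescoping without a substantial new idea (cluster/transfer-matrix reorganization at least). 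Your own phrases ``should then express'' and ``is where the real work lies'' concede that this step is a hope, not an argument; Stage 2, which is the standard and sound ``guessed operator versus proven operator'' verification, has nothing to run on until Stage 1 delivers $\tilde L$.

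Two secondary points. First, the claimed interchangeability with the Kauers--Koutschan conjecture is too loose: the order-$8$ operator is (at best) a right factor of the order-$16$ recurrence operator attached to the guessed ODE, so its solution space is a proper subspace; proving the order-$16$ recurrence for the sequence does not by itself yield the order-$8$ one. Passing in that direction requires the explicit right-division $R_{16}=Q\,R_{8}$ together with a finite certified check that $R_{8}$ applied to the sequence vanishes (finitely many initial values, away from singularities of $Q$), and \texttt{MinimalRecurrence} as used in the paper is part of the guessing pipeline, not a certificate about the actual sequence. Second, your remark that the guessed $L$ would be ``a right factor of the provable $\tilde L$'' is not what one gets: if both annihilate $f$, what is true is that both are left multiples of the minimal operator of $f$; $L$ right-divides $\tilde L$ only if $L$ itself is that minimal operator. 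This does not damage your Stage 2 procedure, which correctly avoids needing any such factorization, but as written the proposal proves nothing beyond the computational evidence the paper already reports.
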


Note that \texttt{Gfun} (version 4.14, \cite{salvy1994gfun}) finds a D-finite differential equation of order $4$ with polynomial coefficients of degree at most $24$, whose singularities include the ones in the conjecture. This suggests that the other singularities are what Mark van Hoeij refers to as apparent singularities.

\subsection{Catalan at $4n$} It was shown in \cite{teguia2024computing} that D-algebraicity for sequences is closed under composition with arithmetic progressions. The paper states further that the order of the subsequence thus defined remains the same, and gives an example with the sequence $(C_{3n})_{n\in\NN},$  where $C_n$ denotes the $n$th term of the Catalan sequence. We use guessing to compute such an equation for $(C_{4n})_{n\in\NN}$. According to Proposition 5.2 in \cite{teguia2024computing}, a second-order algebraic difference polynomial exists. We use modular arithmetic and the explicit formula of $C_{4n}$ to compute as many terms as necessary. For many primes, it is possible to generate $300$ terms within seconds. We obtain the shape of the plausible ADE sought with most primes. Let $\p$ (eg. $3697$) be a sufficiently large prime for which our implementation yields the shape below.

\vspace{-0.35cm}

\begin{small}
\[
\begin{split}
&\texttt{> with(NLDE:-DalgSeq):}\\
&\texttt{> modDalgGuess(L,degADE=2,startfromord=8,revars=s(n),modulus=}\p\texttt{);}
\end{split}
\]

\vspace{-0.35cm}

\begin{dmath*}
c_{21}s_{n+1}^8 + c_{2}s_{n+2}s_{n+1}^3s_{n}^4 + c_{13}s_{n+2}^2s_{n+1}^4s_{n}^2 + c_{14}s_{n+2}^3s_{n+1}^3s_{n}^2 + c_{15}s_{n+2}^4s_{n+1}^2s_{n}^2 + c_{17}s_{n+2}s_{n+1}^6s_{n} + c_{18}s_{n+2}^2s_{n+1}^5s_{n} + c_{19}s_{n+2}^3s_{n+1}^4s_{n} + c_{20}s_{n+2}^4s_{n+1}^3s_{n} + c_{3}s_{n+2}^2s_{n+1}^2s_{n}^4 + c_{4}s_{n+2}^3s_{n+1}s_{n}^4 + c_{7}s_{n+2}s_{n+1}^4s_{n}^3 + c_{8}s_{n+2}^2s_{n+1}^3s_{n}^3 + c_{9}s_{n+2}^3s_{n+1}^2s_{n}^3 + c_{10}s_{n+2}^4s_{n+1}s_{n}^3 + c_{12}s_{n+2}s_{n+1}^5s_{n}^2 + c_{1}s_{n+1}^4s_{n}^4 + c_{22}s_{n+2}s_{n+1}^7 + c_{25}s_{n+2}^4s_{n+1}^4 + c_{11}s_{n+1}^6s_{n}^2 + c_{6}s_{n+1}^5s_{n}^3 + c_{23}s_{n+2}^2s_{n+1}^6 + c_{5}s_{n+2}^4s_{n}^4 + c_{16}s_{n+1}^7s_{n} + c_{24}s_{n+2}^3s_{n+1}^5=0.
\end{dmath*}
\end{small}
The constants $c_i\in\ZZ/\p\ZZ,i=1\ldots25.$ We can recover the correct equation by solving a linear system with the first $27$ terms (in $\NN$) of $(C_{4n})_{n\in\NN}$, or by using sufficiently many primes so that the Chinese remainder theorem can be applied to recover it. The vector space of coefficients over $\QQ$ is the following.
\begin{small}
\begin{equation*}
\begin{split}
&(c_1,\ldots,c_{25})^T\in \QQ v,\\
&v=\Big(18446744073709551616,-\frac{49886373072382984192}{35},\frac{1326172548227923968}{35},\\
&-\frac{2097317242994688}{5},1653603968016 ,-31525197391593472,\\
&\frac{84896797473898496}{35},-\frac{370029665189888}{5},\frac{38327015208768}{35},-\frac{32002521408}{5},\\
&13159779794944,-\frac{7345572675584}{7},\frac{343283112608}{7},-\frac{5646204608}{5},\frac{308773608}{35},\\
&264241152 ,\frac{238821056}{5},-\frac{112084544}{7},\frac{19766576}{35},-\frac{177232}{35},\\
&-388080 ,4032 ,3064 ,-112 ,1\Big)^T
\end{split}
\end{equation*}  
\end{small}
The resulting equation is satisfied by the closed form of $C_{4n}$, proving that the equation is correct. We observed from our computations that $(C_{dn})_{n\in\NN}$ satisfies an ADE of degree $2d$ for $d\in\{1,2,3,4,5\}.$ We expect that the proof of the following conjecture may follow from the dynamical system approach discussed in \cite{teguia2024computing}.

\begin{conjecture} $(C_{dn})_{n\in\NN}$, $d\in\NN\!\setminus\!\{0\},$ is D-algebraic of order $2$ degree $2d$.
\end{conjecture}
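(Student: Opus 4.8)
The plan is to establish the conjecture in two directions: existence of an order-$2$, degree-$2d$ defining difference polynomial, and minimality of both the order and the degree. The natural framework is the one announced in \cite{teguia2024computing}, where D-algebraicity is shown to be closed under composition with arithmetic progressions via a dynamical-system (state-space) construction. I would first recall that the Catalan sequence $(C_n)_{n\in\NN}$ is itself D-algebraic of order $1$ and degree $2$, as witnessed by the rationalizing difference polynomial underlying the recursion displayed in the introduction, namely one equivalent to $(10s_n-s_{n+1})\,s_{n+2}-2s_{n+1}(8s_n+s_{n+1})$. The first main step is to encode the closed form $C_n=\binom{2n}{n}/(n+1)$ and observe that $C_{4n}$ is a subsequence obtained by iterating the ``shift-by-one'' dynamics of the Catalan recursion four times; composing the rational map four times produces a rational recurrence for $(C_{4n})_{n\in\NN}$ whose order is preserved. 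Proposition~5.2 of \cite{teguia2024computing}, which the excerpt invokes, should guarantee that the resulting subsequence remains D-algebraic of the \emph{same} order, here order $2$ after the standard reduction of the first-order rational recursion of the full sequence into a second-order index-free difference polynomial.

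The second step, the heart of the matter, is to pin down the degree as exactly $2d$ for $d=4$ and to see why the pattern $2d$ is forced. I would set up the elimination explicitly: treat $C_{4n},C_{4(n+1)},C_{4(n+2)}$ as three points on the curve, use the four-fold composed rational map to express $s_{n+1}$ and $s_{n+2}$ as rational functions of $s_n$ and the auxiliary index-carrying data, and then eliminate the index variable. Since each application of the Catalan map has degree $2$ in the relevant variable, four compositions heuristically raise the degree to $2\cdot 4=8=2d$; the elimination of the index $n$ (which enters through the Pochhammer-type factors) against a second coordinate is what converts the index-dependent first-order relation into an index-free second-order relation of total degree $2d$. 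The guessed equation above, with its $25$ monomials of total degree $8$ and the rational coefficient vector $v$, serves as the concrete certificate for $d=4$: once the shape is fixed, substituting the closed form of $C_{4n}$ and clearing denominators verifies membership, exactly as carried out in the worked computation. For the general $d$, I would argue inductively that composing the degree-$2$ Catalan map $d$ times yields an index-dependent relation of degree $2d$, and that the elimination step neither raises nor lowers this total degree.

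The main obstacle I anticipate is the \emph{sharpness} of the degree bound, i.e., proving that no difference polynomial of smaller total degree (or smaller order) defines $(C_{dn})_{n\in\NN}$ in the sense of Definition~\ref{def:dalgseq}. Exhibiting a degree-$2d$ equation is constructive and, via guessing, essentially verified for small $d$; but ruling out accidental lower-degree relations requires a genuine lower-bound argument. The cleanest route is likely a transcendence-degree or Newton-polygon estimate on the associated space curve: one shows that the image of the composed map genuinely has the algebraic complexity predicted by the degree of composition, so that the minimal polynomial relating $s_n,s_{n+1},s_{n+2}$ cannot drop below total degree $2d$. A second, more delicate point is confirming that the order remains $2$ and does not collapse to $1$ for special $d$; this is where the genericity hypotheses in the three conditions of Definition~\ref{def:dalgseq}—in particular that the initial $\Init_{p_j}$ vanishes only finitely often—must be checked for each irreducible component. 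I expect the existence and degree-upper-bound portions to be routine given \cite{teguia2024computing}, while the matching lower bound on the degree is the step that will demand the real work, plausibly completed through the dynamical-system analysis the authors already suggest.
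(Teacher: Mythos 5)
First, a point of calibration: the statement you set out to prove is not proved in the paper at all --- it is stated as a conjecture. The paper's support for it consists of (i) a guess-and-verify treatment of the single instance $d=4$: modular guessing to fix the monomial support, a linear solve over $\QQ$ for the $25$ coefficients, and a final check that the closed form of $C_{4n}$ satisfies the resulting order-$2$, degree-$8$ equation; and (ii) the computational observation that the pattern ``degree $=2d$'' persists for $d\in\{1,\ldots,5\}$. The general case is explicitly left open, with only a pointer to the dynamical-system approach of \cite{teguia2024computing}. So your proposal cannot be matched against a paper proof; the question is whether it would close the conjecture, and it would not.

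The central gap is your degree count. You argue that since the index-free Catalan recursion has degree $2$, composing the underlying rational map $d$ times ``raises the degree to $2\cdot d$,'' and you propose to turn this into an induction. But degrees of rational maps \emph{multiply} under composition, they do not add: $d$-fold composition of a degree-$2$ map has degree up to $2^d$, and the subsequent elimination of the auxiliary (index-carrying) coordinate gives no a priori reason for the total degree to collapse from exponential to linear in $d$. The linear growth $2d$ is precisely the surprising content of the conjecture --- it is exactly what a naive composition-plus-elimination argument fails to deliver --- so your inductive step assumes what has to be proved; any real proof must exhibit the massive cancellation (or a support bound in the spirit of \cite{mukhina2025projecting}) that brings $2^d$ down to $2d$. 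Two secondary points. The rationalizing Catalan difference polynomial you invoke, $(10s_n-s_{n+1})s_{n+2}-2s_{n+1}(8s_n+s_{n+1})$, has order $2$, not order $1$ (it relates $s_n,s_{n+1},s_{n+2}$); it is in fact the $d=1$ instance of the conjecture itself. And by the paper's own usage (see the discussion following Proposition \ref{prop:catoFib}), ``D-algebraic of order $2$ and degree $2d$'' is an existence claim about some defining difference polynomial in the sense of Definition \ref{def:dalgseq}, so the minimality and lower-bound questions you single out as the hard part are not required; the genuinely missing piece is the upper bound $2d$ for general $d$, which your plan does not supply.
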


\subsection{Differential tree automata \cite{manssour2024differential}} In this part, we want to construct the nonlinear differential equation of the generating function arising from the two-dimensional differential tree automaton $\mathcal{A}=(2,\mu)$ over the alphabet $\Sigma=\{\sigma_0,\sigma_1,\sigma_2\}$. The transition function $\mu$ is defined as $\mu(\sigma_0)=[0\quad 1],$
\[\mu(\sigma_1)=\begin{bmatrix}0&0\\1&0\end{bmatrix},\quad \text{and}\quad \mu(\sigma_2)=\begin{bmatrix}\boldsymbol{0}_{3\times1}&\boldsymbol{0}_{3\times1}\\1&\frac{x_1+1}{x_0}\end{bmatrix}.\] 
This automaton generates the sequence of labelled rooted trees. An equivalent construction is discussed in \cite[Example 2]{manssour2024differential}.
The transition induces a function $\boldsymbol{\mu}\colon T_{\Sigma}\rightarrow \KK^{1\times 2}$ such that for $t=\sigma_k(t_1,\ldots,t_k)$
\[\boldsymbol{\mu}(t)\coloneqq (\boldsymbol{\mu}(t_1)\otimes \cdots \otimes \boldsymbol{\mu}(t_k))\cdot \mu(\sigma_k)(|t|,|t_1|,\ldots,|t_k|),\]
where $|t|$ is the size of $t$, and $\otimes$ is the Kronecker product. We are interested in the formal series of general term
\[\boldsymbol{a}_n\coloneqq \left[a_n^{(1)}\quad a_n^{(2)}\right]\! \coloneqq\!\! \sum_{t\in T_{\Sigma},\, |t|=n+1} \boldsymbol{\mu}(t).\]
This leads to the recursion 
\[\left[a_{n+1}^{(1)}\quad a_{n+1}^{(2)}\right] = \left[a_n^{(2)}\quad \frac{1}{n}\sum_{k=0}^{n-1}(k+1)\,a_{k+1}^{(2)}a_{n-k}^{(2)}\right],\]
yielding the sequence $u_0=0,\, u_1=1,\, u_{n+1}=\sum_{k=0}^{n-1}(k+1)u_{k+1}u_{n-k}/n,\, n\geq 1.$ Putting the first $9$ terms of the sequence thus defined in $\texttt{L}$ is enough for the following code to guess the correct differential equation.
\begin{small}
\begin{verbatim}
> NLDE:-DalgFunGuess(L,degADE=2,allPolyDeg=true,startfromord=1)
\end{verbatim}
\end{small}
\[y - xy' + xy'y=0.\]
The corresponding generating function is $-W(-x)=\sum_{n=1}^{\infty}n^{n-1}x^n/n!$, where $W$ is the Lambert function. This is exactly the known exponential generating function of labelled rooted trees. 

In general, any D-algebraic guesser can be used to guess algebraic differential equations for generating functions arising from differential tree automata. The correctness of the equations found can be verified by reconstructing the tree automata using the corresponding recurrence equations.

\subsection{Primes and D-algebraicity} As a final application, let us consider the sequence of odd-indexed primes \href{https://oeis.org/A031215}{A031215} (even-indexed if starting at index $1$) whose first five terms are $3,7,13,$ $19,29.$ There are undoubtedly several qualitative arguments to explain why this sequence is not D-algebraic. We prove this by investigating the consequences related to the shape of the difference polynomials returned by our implementation. Using the first $55$ terms of the sequence, we get a first-order difference polynomial of degree $8$. This is obtained with the following command.
\begin{small}
\[
\texttt{> DalgSeq:-DalgGuess(L,degADE=8,startfromord=1,revars=s(n));}
\]
\end{small}
The possible coefficients form a two-dimensional vector space. We select the equation generated by one of the basis vectors, resulting in the equation below.
\begin{small}
\begin{equation}\label{eq:primes}
\begin{split}
&\left(s_{n+1}-s_{n}-6\right) \left(s_{n+1}-s_{n}-8\right) \left(s_{n+1}-s_{n}-10\right) \left(s_{n+1}-s_{n}-12\right)\\
&\Big(-4413259179869935469s_{n+1}^{2} s_{n}^{2}+9018202756736673077s_{n+1} s_{n}^{3}\\
&-4605857486207652970 s_{n}^{4}+17613028825077611962910s_{n+1}^{3}\\
&-49894515590732679680380s_{n+1}^{2} s_{n}+46888011363592541300913s_{n+1} s_{n}^{2}\\
&-14608471402583429770869 s_{n}^{3}-1352963324244298950066469s_{n+1}^{2}\\
&+2681855784524737400584915s_{n+1} s_{n}-1327176208257158948463120s_{n}^{2}\\
&+21624567546730578358641529s_{n+1}-21970446172359255395471255 s_{n}\\
&-64805493224537500905120572\Big)=0.
\end{split}
\end{equation}
\end{small}

Let $(\p_n)_{n\in\NN}$ be the sequence of prime numbers. The presence of long integers makes the above ADE unlikely to be correct. However, a critical remark is that it explicitly involves the prime gaps $6,8,10,$ and $12$. To us, this is an indication that this sequence of every second prime is not D-algebraic. Note that based on \cite[Theorem 5.1]{teguia2024computing}, if this sequence is not D-algebraic, then the sequence of primes is also not D-algebraic. For otherwise $(\p_{2n+1})_{n\in\NN}$, which is a subsequence of $(\p_n)_{n\in\NN}$ indexed by an arithmetic progression, would be D-algebraic. 

The ADE in \eqref{eq:primes} can only predict $56$ terms accurately. With the first $100$ terms in the input, we find another first-order ADE, this time of degree $12$, with the prime gaps $6,8,10,$ $12,14,$ and $18$ appearing in linear factors as in \eqref{eq:primes}. Note that, according to Definition \ref{def:dalgseq}, if this sequence is D-algebraic with defining difference polynomial having factors of the form $s_{n+1}-s_n-g$, $g\in2\NN$, then the corresponding $g$ appears infinitely often as a prime gap, an unsolved conjecture for all even integers. Observe that $g$ must be even as all terms of $(\p_{2n+1})_{n\in\NN}$ are odd.

In 1849, de Polignac conjectured that any even integer occurs infinitely often as a difference of two primes. It is easy to show that no first-order difference polynomial can define $(\p_{2n+1})_{n\in\NN}$ if de Polignac's conjecture holds. Indeed, the defining difference polynomial would have an unbounded degree because infinitely many $s_{n+1}-s_n-g,$ $g\in 2\NN,$ would divide it. This is a consequence of the fact that for any polynomial $p\in\KK[x_1,x_2],$ if $p(x,x+g)=0$ for infinitely many $x$ and a fixed $g$, then $p$ must vanish on the line defined by $x_2=x_1+g.$

However, we do not need to rely on de Polignac's conjecture to prove that primes are not D-algebraic. The above reasoning may be generalized to difference polynomials of arbitrary order based on existing results. There have been significant new results on prime gaps over the last two decades. The reference \cite{Granville2015} provides details of some of the main breakthroughs, which we exploit to prove that $(\p_{2n+1})_{n\in\NN}$ is not D-algebraic. The proof idea was explained to us by Florian Luca, who provided us with a complete direct proof of the non-D-algebraicity of the sequence of primes.

Before stating the theorem and its proof, we give some recalls. A $k$-tuple $(g_1,\ldots,g_k)$ is admissible if for every prime $\p$ there is an integer $n$ such that $(n+g_1)\cdots(n+g_k)\not\equiv 0\!\!\! \mod \p.$ Given $k$, it is straightforward to create admissible $k$-tuples. One can take any $k$ distinct integers, all multiples of $k!$. Maynard and Tao (2013, see details and references in \cite{Granville2015}) showed that for any $r\ge 2,$ there is a number $M(r)$ such that from any admissible $k$-tuple $(g_1,\ldots,g_k),$ with $k> M(r),$ there is $I\subset \{1,\ldots,k\}$ of cardinality $r$ such that $n+g_i, i\in I,$ are all primes for infinitely many $n$. Any integer $M$ such that $M \log M> e^{8r+4}$ can play the role of $M(r).$ In \cite{BanksFreibergTurnageButterbaugh2015}, it was shown that one can further ensure that these primes are consecutive in the sequence of primes.

\begin{theorem}\label{th:primes} The sequence $(\p_{2n+1})_{n\in\NN}$ is not D-algebraic.
\end{theorem}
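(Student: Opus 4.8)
The plan is to prove non-D-algebraicity by contradiction, following the structure already visible from the discussion preceding the theorem and exploiting the consequences of Definition \ref{def:dalgseq} together with the Maynard--Tao machinery. Suppose, for contradiction, that $(\p_{2n+1})_{n\in\NN}$ is D-algebraic, defined by some difference polynomial $p\in R_s$ of order $r$ and total degree $D$. By the definition, $p$ factors into irreducible components $p_0\cdots p_\ell$, and for each index $n$ there is a unique component vanishing at the shifted tuple $(u_n,u_{n+1},\ldots,u_{n+r})$, where $u_n=\p_{2n+1}$. The key geometric observation, already stated in the excerpt for the first-order case, is that a fixed polynomial can only vanish on infinitely many prime-gap-constrained tuples if it vanishes identically on the corresponding variety. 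The main task is to upgrade the first-order argument (where the relevant constraint is $s_{n+1}-s_n-g=0$) to arbitrary order $r$.

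The heart of the argument is to choose an admissible tuple large enough to force a forbidden identity. First I would fix the order $r$ and degree $D$ of the hypothetical $p$. Applying the Maynard--Tao theorem with $k>M(r+1)$ and an admissible $k$-tuple all of whose entries are multiples of $k!$, combined with the consecutiveness refinement of \cite{BanksFreibergTurnageButterbaugh2015}, I obtain infinitely many $n$ for which the $r+1$ consecutive primes $\p_{m},\p_{m+1},\ldots,\p_{m+r}$ realize a prescribed admissible pattern of gaps; in particular the vector of differences $(\p_{2n+1},\ldots,\p_{2(n+r)+1})$ lands infinitely often inside a fixed affine subspace (or more generally a fixed positive-dimensional parameter family) $L$ determined by the chosen gap pattern. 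Since some irreducible component $p_j$ must vanish on infinitely many of these tuples, and $L$ can be arranged to have dimension exceeding what a proper subvariety of degree $\le D$ can accommodate, $p_j$ is forced to vanish identically on $L$. This is the step that requires the quantitative freedom in the Maynard--Tao theorem: by enlarging $k$ (hence the number of available admissible patterns) while keeping $r$ fixed, I can produce more independent gap configurations than the bounded-degree $p_j$ can separate, yielding the vanishing on a subvariety large enough to violate the admissibility/invertibility conditions (1)--(3) of Definition \ref{def:dalgseq}.

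The contradiction I would extract is with condition (3) — uniqueness of the vanishing component and the fact that $p$ must \emph{determine} later terms. If a component vanishes on a whole subspace $L$ of the gap pattern, then along the corresponding infinite subsequence the difference polynomial fails to pin down $\p_{2(n+r)+1}$ from its predecessors: the invertibility forced by $\Init_{p_j}\neq 0$ is incompatible with $p_j$ being constant along the fiber direction that moves the top-order coordinate within $L$. Concretely, vanishing on $L$ means $p_j$ contains the linear form cutting out $L$ as a factor, so $p_j$ is divisible by something like $s_{n+r}-s_{n+r-1}-g$; iterating over the infinitely many admissible gaps $g$ (which exist precisely because large admissible tuples yield unboundedly many realizable consecutive-gap values) forces $p_j$, and hence $p$, to have unbounded degree, contradicting the fixed degree $D$.

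The main obstacle I anticipate is making the passage from ``infinitely many tuples on a variety'' to ``identically vanishing on a large-dimensional family'' fully rigorous while respecting that the Maynard--Tao/Banks--Freiberg--Turnage-Butterbaugh results give existence of \emph{some} admissible subset of size $r+1$ rather than control over the \emph{exact} gap sequence. The delicate point is that the realized gaps are not prescribed arbitrarily; one only knows that within a large admissible $k$-tuple, many size-$(r+1)$ subpatterns occur infinitely often as consecutive-prime patterns. I would address this by a counting/pigeonhole argument: there are only finitely many irreducible components $p_j$, each of bounded degree, so among the super-polynomially many admissible subpatterns guaranteed by taking $k\to\infty$, some single component must absorb a family of patterns rich enough to force the factor-divisibility above. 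Getting the dimension bookkeeping right — that the number of independent admissible configurations outstrips the $O(D)$ constraints a fixed-degree polynomial imposes — is where the real work lies, and it is precisely here that Florian Luca's direct argument for the full prime sequence, invoked in the excerpt, would serve as the template to adapt to the arithmetic-progression-indexed subsequence.
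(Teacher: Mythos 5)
Your overall skeleton---contradiction, Maynard--Tao plus the consecutiveness result of \cite{BanksFreibergTurnageButterbaugh2015}, and the principle that a fixed polynomial cannot vanish at infinitely many points of a gap-constrained family without vanishing on the family itself---matches the paper's. But the step you use to extract the contradiction is wrong, and it is precisely the step where the order-$1$ argument must be upgraded. You write that vanishing on $L$ ``means $p_j$ contains the linear form cutting out $L$ as a factor, so $p_j$ is divisible by something like $s_{n+r}-s_{n+r-1}-g$,'' and you then derive unbounded degree from infinitely many such factors. This divisibility is valid only when $r=1$: there the constraint set $\{s_{n+1}=s_n+g\}$ is a hyperplane in two variables, so containment implies divisibility (this is exactly the de Polignac discussion preceding the theorem). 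For order $r\geq 2$ the set $L$ determined by a full gap pattern $\boldsymbol{g}=(g_{j_1},\ldots,g_{j_r})$ is the \emph{line} $\{(x+g_{j_1},\ldots,x+g_{j_r}):x\}$, of codimension $r$, not $1$; it is not cut out by a single linear form, and a hypersurface can contain a line without acquiring any corresponding factor. So the ``unbounded degree'' contradiction, and with it your appeal to condition (3) and the initials of Definition \ref{def:dalgseq}, collapses exactly in the cases the theorem is about.

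The paper's proof needs none of this. Once Maynard--Tao and \cite{BanksFreibergTurnageButterbaugh2015} supply a \emph{single} fixed pattern $\boldsymbol{g}$ realized by consecutive terms of $(\p_{2n+1})_{n\in\NN}$ for infinitely many $n$ (the parity issue you gloss over is handled by taking $r'=2r$ consecutive primes and keeping the odd-indexed ones), one substitutes to get a univariate polynomial $p_{\boldsymbol{g}}(x)\coloneqq p(x+g_{j_1},\ldots,x+g_{j_r})$ vanishing at infinitely many integers, hence $p_{\boldsymbol{g}}=0$. The contradiction is then arranged \emph{in advance}: since $p\neq 0$, the translation vectors $\boldsymbol{g}$ with $p_{\boldsymbol{g}}\equiv 0$ form a proper (Zariski-closed) subset, and because the entries of admissible $k$-tuples can be taken arbitrarily large, one can choose the tuple so that no size-$r$ subpattern lies in that bad set---so whichever subset Maynard--Tao hands back, $p_{\boldsymbol{g}}\neq 0$, which is absurd. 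Your pigeonhole/dimension bookkeeping (``more independent gap configurations than the bounded-degree $p_j$ can separate'') tries to win by accumulating many patterns; besides resting on the false divisibility, it is never reduced to a provable counting statement. One good pattern suffices, and ``good'' is a genericity condition you can impose when constructing the admissible tuple, not something you must extract from the primes.
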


\begin{proof} Let $(g_1,\ldots,g_k)$ be an admissible $k$-tuple, with $k$ sufficiently large to apply Maynard-Tao's theorem. Let $n$ be such that $\p_{j_i}=n+g_{j_i},$ with $j_i\!\!\in\!\!\!\{j_1,\ldots,j_r\}$, are consecutive terms in $(\p_{2n+1})_{n\in\NN}.$ This can be ensured using the results in \cite{BanksFreibergTurnageButterbaugh2015} by taking $r'=2r$ and every odd-indexed prime among the consecutive primes.

Thus, if $(\p_{2n+1})_{n\in\NN}$ is D-algebraic of order $r$ with defining polynomial $p$, we have
\[0=p(\p_{j_1},\ldots,\p_{j_r})=p(n+g_{j_1},\ldots,n+g_{j_r}).\]
From the last equality, we can write $p_{\boldsymbol{g}}(n)=0,$ $\boldsymbol{g}=(g_{j_1},\ldots,g_{j_r}),$ where $p_{\boldsymbol{g}}\in \KK[x]$ (we may assume that $\KK=\QQ$). Thus, for infinitely many $n$ we have $p_{\boldsymbol{g}}(n)=0$, which implies that $p_{\boldsymbol{g}}=0$. This, however, is impossible for all choices of admissible $k$-tuples.
\end{proof}
\noindent Regarding the concluding statement in the proof above, it is not difficult to construct $k$-tuples that do not vanish multivariate polynomials of $k$ variables. The main idea is to exploit the fact that the components of these $k$-tuples can be arbitrarily large.

\medskip

We hope our algorithms and their applications highlight the interest in studying D-algebraic objects and why guessing can sometimes be an effective alternative for computing and exploring with them. We were unable to compute most minimal differential polynomials for the challenging dynamical systems in \cite{mukhina2025projecting,mukhina2025support} within a reasonable time, likely due to the slow instantiation of multisums from recurrence equations -- a direction for future work. Nonetheless, our current algorithms and implementation remain practical for computing differential and difference equations of arbitrary degree and order, provided the expected output size is modest.

\section*{Acknowledgments}
\noindent This work was supported by UKRI Frontier Research Grant EP/X033813/1. The author is indebted to Florian Luca for answering all his questions about prime numbers. Small parts of this paper were presented at ACA 2025, in the sessions AADIOS and D-finiteness and beyond. The author thanks the participants for stimulating discussions.


\end{document}